\begin{document}

\newtheorem{definition}{\bf Definition}
\newtheorem{notation}{\bf Notation}
\newtheorem{theorem}{\bf Theorem}
\newtheorem{proposition}{\bf Proposition}
\newtheorem{lemma}{\bf Lemma}
\newtheorem{corollary}{\bf Corollary}
\newtheorem{example}{\bf Example}
\newtheorem{remark}{\bf Remark}
\newtheorem{Table}{\bf Table}
\newtheorem{Sentence}{\bf Step}
\newtheorem{Branch}{}
\newtheorem*{problem}{Problem}

%%%%%%%%%%%%%%%%%%%%%%%%%%%%%%%%%%%%%%%%%%%%%
%      Letters
%%%%%%%%%%%%%%%%%%%%%%%%%%%%%%%%%%%%%%%%%%%%%
\def\T {\ensuremath{\bf{T}}}
\def\N {\ensuremath{\mathbb{N}}}
\def\A {\ensuremath{\bf{A}}}
\def\B {\ensuremath{\bf{B}}}
\def\P {\ensuremath{\bf{P}}}
\def\S {\ensuremath{\mathbb{S}}}
\def\E {\ensuremath{\bf{E}}}
\def\H {\ensuremath{\bf{H}}}
\def\V {\ensuremath{\rm{V}}}
\def\D {\ensuremath{\rm{D}}}
\def\PF {\ensuremath{\bf {PF}}}
\def\TH {\ensuremath{\bf{TH}}}
\def\RS {\ensuremath{\mathbb{T}}}
\def\HCTD {\ensuremath{\tt{HCTD}}}
\def\HPCTD {\ensuremath{\mathrm{HPCTD}}}
\def\CTD {\ensuremath{\mathrm{CTD}}}
\def\PCTD {\ensuremath{\mathrm{PCTD}}}
\def\WUCTD {\ensuremath{\mathrm{WUCTD}}}
\def\RSD {\ensuremath{\mathrm{RSD}}}
\def\FWCTD {\ensuremath{\mathrm{FWCTD}}}
\def\SWCTD {\ensuremath{\mathrm{SWCTD}}}
\def\ARSD {\ensuremath{\tt{RSD}}}
\def\APRSD {\ensuremath{\tt{WRSD}}}
\def\CCTD {\ensuremath{\tt{CTD}}}
\def\ASWCTD {\ensuremath{\tt{SWCTD}}}
\def\ASMPD {\ensuremath{\tt{SMPD}}}
\def\AHPCTD {\ensuremath{\tt{HPCTD}}}
\def\TDU {\ensuremath{\mathrm{TDU}}}
\def\ATDU {\ensuremath{\tt{TDU}}}
\def\RDU {\ensuremath{\mathrm{RDUForZD}}}
\def\ARDU {\ensuremath{\tt{RDUForZD}}}
\newtheorem{Rules}{\bf Rule}

%%%%%%%%%%%%%%%%%%%%%%%%%%%%%%%%%%%%%%%%%%%%%%
%       For Triangular Decomposition
%%%%%%%%%%%%%%%%%%%%%%%%%%%%%%%%%%%%%%%%%%%%%%
\newcommand{\disc}[1]{\mbox{{\rm disc}$(#1)$}}
\newcommand{\alg}[1]{\mbox{{\rm alg}$(#1)$}}
\newcommand{\SAT}[1]{\mbox{{\rm sat}$(#1)$}}
\newcommand{\SQR}[1]{\mbox{{\rm sqrt}$(#1)$}}
\newcommand{\ideal}[1]{\langle#1\rangle}
\newcommand{\I}[1]{\mbox{{\rm I}$_{#1}$}}
\newcommand{\ldeg}[1]{\mbox{{\rm ldeg}$(#1)$}}
\newcommand{\iter}[1]{\mbox{{\rm iter}$(#1)$}}
\newcommand{\mdeg}[1]{\mbox{{\rm mdeg}$(#1)$}}
\newcommand{\lv}[1]{\mbox{{\rm lv}$_{#1}$}}
\newcommand{\mvar}[1]{\mbox{{\rm mvar}$(#1)$}}
\newcommand{\prem}[1]{\mbox{{\rm prem}$(#1)$}}
\newcommand{\pquo}[1]{\mbox{{\rm pquo}$(#1)$}}
\newcommand{\rank}[1]{\mbox{{\rm rank}$(#1)$}}
\newcommand{\res}[1]{\mbox{{\rm res}$(#1)$}}
\newcommand{\cls}[1]{\mbox{{\rm cls}$_{#1}$}}
\newcommand{\sat}[1]{\mbox{{\rm sat}$(#1)$}}
\newcommand{\sep}[1]{\mbox{{\rm sep}$(#1)$}}
\newcommand{\tail}[1]{\mbox{{\rm tail}$(#1)$}}
\newcommand{\zm}[1]{\mbox{{\rm MZero}$(#1)$}}
\newcommand{\zero}[1]{\mbox{{\rm Zero}$(#1)$}}
\newcommand{\rd}[1]{\mbox{{\rm Red}$(#1)$}}
\newcommand{\map}[1]{\mbox{{\rm map}$(#1)$}}
\newcommand{\op}[1]{\mbox{{\rm op}$(#1)$}}

\title{ {\bf Special Algorithm for Stability Analysis\\ of Multistable Biological
Regulatory Systems}\thanks{This research was partly supported by US National Science Foundation
Grant 1319632, China Scholarship Council, and National Science Foundation
of China Grants 11290141 and 11271034.}}

\author{
Hoon Hong\\\\
{\small \it Department of Mathematics, North Carolina State University, Raleigh
NC 27695, USA}\\\\
Xiaoxian Tang\thanks{Corresponding author}, ~~Bican Xia\\\\
{\small \it LMAM \& School of Mathematical Sciences, Peking University, Beijing
100871, China}\\
%\\
%Bican Xia\\\\
%{\small \it LMAM \& School of Mathematical Sciences, Peking University, %Beijing
%100871, China}
}

\date{}
\maketitle
\begin{abstract}
 We consider the problem of  counting (stable) equilibriums of an important
family of algebraic differential equations modeling multistable biological
regulatory systems.   The problem can be solved, in principle,  using real
quantifier elimination
algorithms, in particular real root classification algorithms. However, it
is well known that they can handle only very small cases due to the enormous
computing time requirements. In this paper, we present a special algorithm
which is much more efficient than the general methods.   Its efficiency comes
from the exploitation of certain interesting structures of the family of
differential equations.

~\\
{\it Key words: }
quantifier elimination, root classification,  biological regulation system,
stability
\end{abstract}
%\newpage
\section{Introduction}\label{sec:1}

Modeling biological networks mathematically as dynamical systems and analyzing
the local and global behaviors of such systems is an important method of
computational biology. The most concerned behaviors of such biological systems
are equilibrium, stability, bifurcations, chaos and so on.

Consider the stability analysis of biological networks modeled by autonomous
systems of differential equations of the form $\dot{{\mathbf x}}={\mathbf
f}\left({\mathbf u},{\mathbf x}\right)$
where ${\mathbf x}=\left(x_1,\ldots,x_n\right)$, $${\mathbf f}\left({\mathbf
u},{\mathbf x}\right)=\left(f_1\left({\mathbf u},x_1,\ldots,x_n\right),\ldots,f_n\left({\mathbf
u},x_1,\ldots,x_n\right)\right)$$ and each $f_k\left({\mathbf u},x_1,\ldots,x_n\right)$
is a rational function in $x_1,\ldots,x_n$ with real coefficients and real
parameter(s) ${\mathbf u}$. We would like to compute a partition of the parametric
space of ${\mathbf u}$ such that, inside every open cell of the partition,
the number of (stable) equilibriums of the system is uniform. Furthermore,
for each open cell, we would like to determine the number of (stable) equilibriums.

Such a problem can be easily formulated as a real quantifier elimination
problem. It is well known that the real quantifier elimination problem can
be carried
out algorithmically.
\cite{tarski1951,collins1975,arnon1988, m1988, mccallum1999,mccallum2001,
Grig88, hong1990a, hong1990b,ch1991,renegar1992a,renegar1992b,
renegar1992c, BPR1996, BPRRoadmap,BPRBook, brown2001a,brown2001b, dss2004,
bm2005,
strzebonski2006,strzebonski2011,brown2012,hs2012,bdemw2013,brown2013}.
 There are several software systems such
as {\tt QEPCAD} \cite{ch1991,hong1992,brown2001a,b}, 
{\tt Redlog} \cite{ds1997}, {\tt Reduce} (in Mathematica) \cite{s2000,s2005}
and {\tt SyNRAC} \cite{ay}. 
Hence, in principle, the stability analysis of  %the multistable biological
regulation system
the above system can be carried out automatically using those software systems.
 However, it is also well known that the  complexity \cite{dh1988,BPR1996}
of those algorithms are way beyond current computing capabilities since those
algorithms are for general quantifier elimination problems.  

The stability analysis is a special type of  quantifier elimination problem,
in particular, real root classification. Hence, it would be advisable to
use real root classification algorithms~\cite{yhx,yx}.
%???? Cite them in the order of year. There are  various software systems~\cite{?}.
In fact, \cite{wx2005issac,wx2005ab}, \cite{nw2008} and \cite{n2012} tackled
the stability analysis problem using {\tt DISCOVERER}~\cite{x}\footnote{{\tt
DISCOVERER} was integrated later in the {\tt RegularChains} package in Maple.
Since then, there are
several improvements on the package from both mathematical and  programming
aspects \cite{cdmmxx}. One can see the command  {\tt RegularChains[ParametricSystemTools][RealRootClassification]}
in any version of Maple that is newer than Maple $13$.}.   They were able
to tackle a specialized simultaneous decision problem ($n=6$ and $c=2$) \cite{cd2002}
in $55,000$ secs \cite{n2012}.  However, the real root classification software
could not go beyond these,  due to enormous computing time/memory requirements.

In this paper, we consider the problem of counting (stable) equilibriums
of an important family of algebraic differential equations modeling multistable
biological regulation systems, called MSRS (see Definition \ref{cd}). In
fact, the family is a straightforward generalization of several interesting
classes of systems in the literature   \cite{cd2002,cd2005,cp2007}.
The family of differential equations has the form
$\dot{{\mathbf x}}={\mathbf f}\left(\sigma, {\mathbf x}\right)$
where ${\mathbf f}$ is a real function determined by
certain real functions $l\left(z\right)$, $g\left(z\right)$, $h\left(z\right)$
and $P\left({\mathbf x}\right)$ and parameterized by a real parameter $\sigma$.
  
We present a special algorithm which is much more efficient than the general
root classification algorithm.
 The efficiency of the special algorithm comes from the exploitation of certain
interesting structures of the differential equation under investigation such
as
\begin{enumerate}
\item[(1)] the eigenvalues of the Jacobian at every equilibrium  are all
 real,
see Theorem \ref{real};

\item[(2)]every  equilibrium of the system is  made up of at most two components,
see
Theorem~\ref{nd};
\item[(3)]the eigenvalues of the Jacobian at every equilibrium have  certain
structures (see Theorems~\ref{dchar} and \ref{ndchar}), aiding the determination
of stability
of an equilibrium (see Corollary~\ref{stable}).
\end{enumerate}
The special algorithm can handle much larger  system than the general root
classification algorithm.
For example, it can handle a specialized simultaneous decision problem ($n=11$
and $c=8$)  in several seconds.

We remark that our work can be viewed as following the numerous efforts in
applying quantifier elimination to tackle problems from various other disciplines
\cite{lazard1988,ls1993,gonzalez1996,dya1997,
jirstrand1997,weispfenning1997,hrsa,hrsb,yxl1999,aw,wx2005issac,
wx2005ab,bnw2006,ggl2006,nw2008,xyz,
sxxz,swak,
n2012,slxzx}.

The paper is organized as follows.
Section \ref{problem} provides a precise statement of the problem.
Section~\ref{review} reviews a general algorithm based on real root classification.
Section \ref{structure} proves several interesting   structures of the problem.
Section \ref{algorithm} gives a special algorithm that exploits the structure
proved  in Section \ref{structure}.
Section \ref{performance} presents
the experimental timings and compares them to those of a general algorithm.
\section{Problem}\label{problem}
In this section, we give a precise and self-contained description  of  the
problem. First we introduce a family of differential equations that we will
be considering.
\begin{definition}[ MultiStable Regulatory System]\label{cd}
A system of ordinary differential equations
\begin{align*}
\frac{dx_1}{dt}&=f_1\left(\sigma,x_1,\ldots,x_n\right) \\
          &\vdots \\
\frac{dx_n}{dt}&= f_n\left(\sigma,x_1,\ldots,x_n\right)
\end{align*}
is called a {\em multistable regulatory system} (MSRS) if $f_k$  has the
following form
\[
f_{k}\left(\sigma,x_1,\ldots,x_n\right)=-l\left(x_k\right)+\sigma \frac{g\left(x_k\right)}{P\left(x_1,\ldots,x_n\right)+h\left(x_k\right)}\]
where
\begin{enumerate}
\item $\sigma$ is a positive parameter;
\item The function $P$ is symmetric, that is, $$P\left(x_{1},\ldots x_{i},\ldots
,x_{j},\ldots ,x_{n}\right)
=P\left( x_{1},\ldots x_{j},\ldots ,x_{i},\ldots ,x_{n}\right)$$ for every
$i,j$;
\item $\forall k\; \forall (x_1,\ldots,x_n) \in {\mathbb R}^n_{>0} \;\;\;P\left(x_1,\ldots,x_n\right)+h\left(x_k\right)>0$;
\item $l\left(z\right)\neq 0$ and for every $\sigma\in {\mathbb R}_{>0}$,
the function $$\sigma\frac{g\left(z\right)}{l\left(z\right)}-h\left(z\right)
$$ has at most
one extreme point on the intended domain of $z$.
\end{enumerate}
%A multistable regulatory system is denoted by $MSRS\left(l,g,h,P,\sigma\right)$
%for given $l,g,h,P$ and~$\sigma$.
\end{definition}

\begin{example}\label{models}
We present several examples of MSRS from cellular differentiation \cite{cd2002,cd2005,cp2007}.
 In fact, the above definition of MSRS is a straightforward generalization
of those differential equations.  \begin{enumerate}
\item{\it Simultaneous decision} \cite{cd2002}.
\[\frac{dx_k}{dt}=-x_{k}+\sigma \frac{1}{1+\Sigma
_{m=1}^{n}x_{m}^{c}-x_{k}^{c}}\]
where the quantities $x_1,...,x_n$ $(\in {\mathbb R}_{>0})$ denote the concentrations
of $n$
proteins,
$c$ ($\in {\mathbb R}_{>0}$) the cooperativity, and $\sigma$ ($\in {\mathbb
R}_{>0}$)  the
strength of unrepressed protein expression, relative to the exponential
decay.
It is easy to verify that it is  a MSRS  with
\begin{align*}
&l\left( z\right)  =z,\;
g\left( z\right)  =1,\;
h\left( z\right)  =-z^{c},\\
&P\left(x_1,\ldots,x_n\right)  =1+\Sigma _{m=1}^{n}x_{m}^{c}.
\end{align*}
The first graph in Figure \ref{fig:n2} shows the graph of $\sigma\frac{g\left(z\right)}{l\left(z\right)}-h\left(z\right)$
for $c=4$ and $\sigma=1$.
\item{\it Mutual inhibition with autocatalysis} \cite{cd2005}.
 \[\frac{dx_k}{dt}=-x_{k}+\alpha +\sigma \frac{x_{k}^{c}}{%
1+\Sigma _{m=1}^{n}x_{m}^{c}}\]
where the quantities $x_1,...,x_n$ $(\in {\mathbb R}_{>0})$ denote the concentrations
of $n$
proteins,
$c$ ($\in {\mathbb R}_{>0}$) the cooperativity,  $\sigma$ ($\in {\mathbb
R}_{>0}$)  the relative speed for transcription/translation, and $\alpha$
($\in {\mathbb R}_{\geq 0}$) the leak expression. It is easy to verify that
it is  a MSRS with\begin{align*}
&l\left( z\right)  =z-\alpha,\;
g\left( z\right)  =z^{c},\;
h\left( z\right)  =0, \\
&P\left(x_1,\ldots,x_n\right) =1+\Sigma _{m=1}^{n}x_{m}^{c}.
\end{align*}
The second graph in Figure \ref{fig:n2} shows the graph of $\sigma\frac{g\left(z\right)}{l\left(z\right)}-h\left(z\right)$
for $\alpha=1$, $c=2$ and $\sigma=1$.
\item{\it bHLH dimerisation} \cite{cd2005,cp2007}.
 \[\frac{dx_k}{dt}=-x_{k}+\sigma \frac{x_{k}^{2}}{\frac{K_2}{a_t^2}
\left(1+\Sigma _{m=1}^{n}x_{m}\right)^{2}+x_{k}^{2}}\]
where the quantities $x_1,...,x_n$ $(\in {\mathbb R}_{>0})$ denote the concentrations
of $n$
proteins, $\sigma$ ($\in {\mathbb
R}_{>0}$)  the relative speed for transcription/translation,  $K_2$ $(\in
{\mathbb R}_{>0})$ the binding constant, and $a_t$ $(\in
{\mathbb R}_{>0})$ the total quantity of activator. It is easy to verify
that it is  a MSRS with\begin{align*}
&l\left( z\right)  =z,\;
g\left( z\right)  =z^2,\;
h\left( z\right)  =z^2,\\
&P\left(x_1,\ldots,x_n\right) =\frac{K_2}{a_t^2} \left(1+\Sigma _{m=1}^{n}x_{m}\right)^{2}.
\end{align*}
The third graph in Figure \ref{fig:n2} shows the graph of $\sigma\frac{g\left(z\right)}{l\left(z\right)}-h\left(z\right)$
for $\sigma=1$.
\end{enumerate}
\begin{figure}[h]
  \centering
 \includegraphics[width=0.36\textwidth]{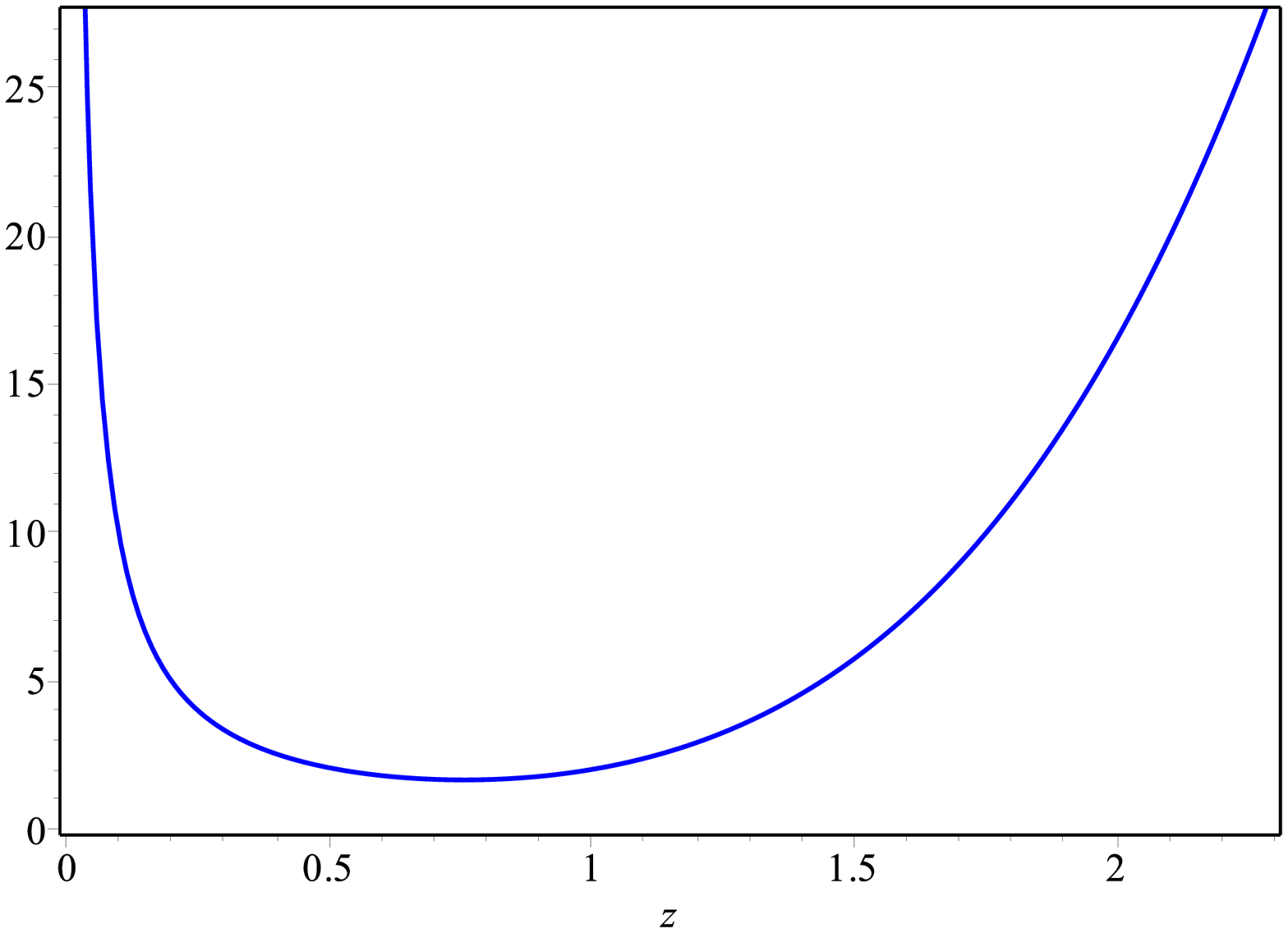}
 \includegraphics[width=0.315\textwidth]{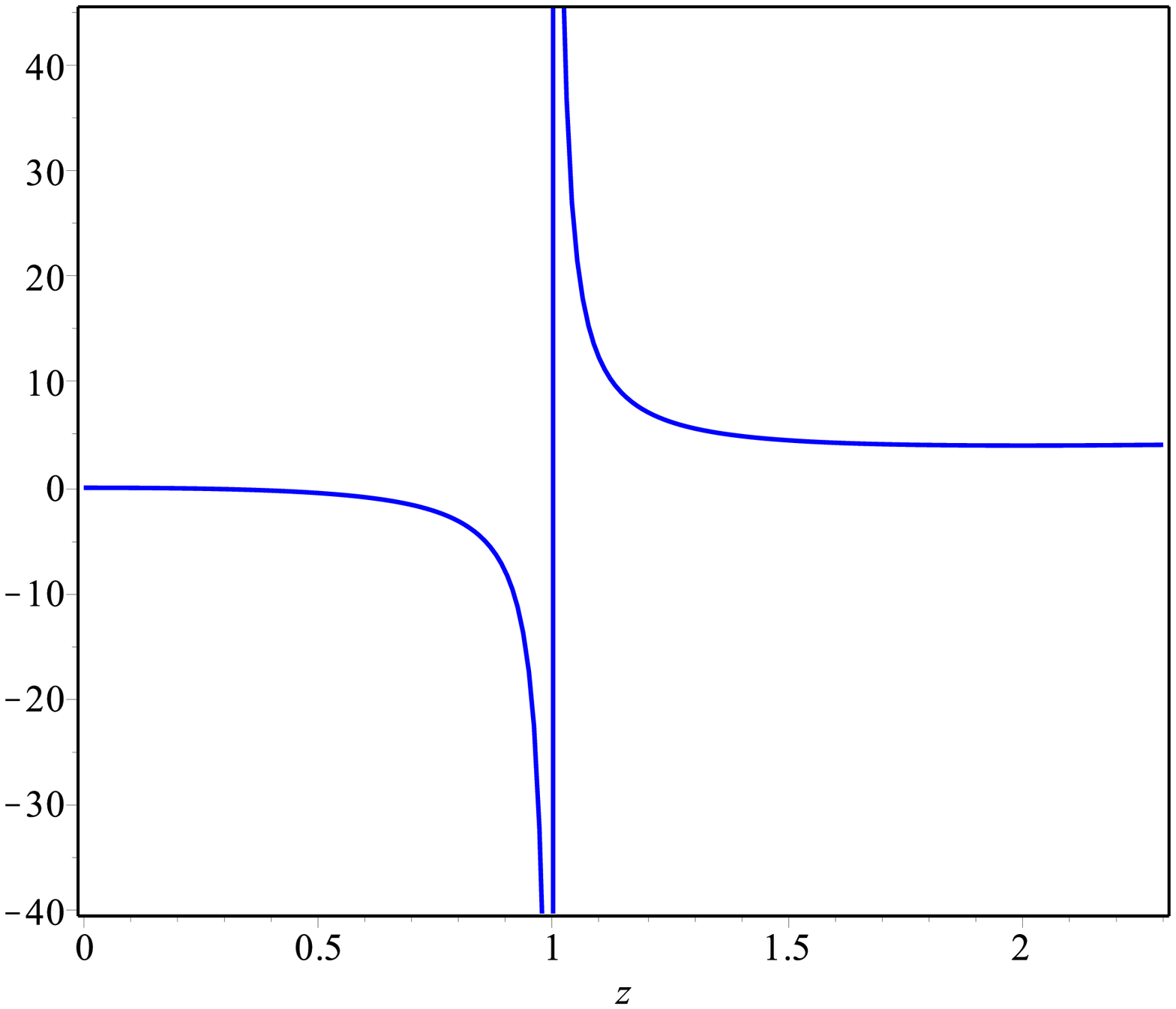}
\includegraphics[width=0.30\textwidth]{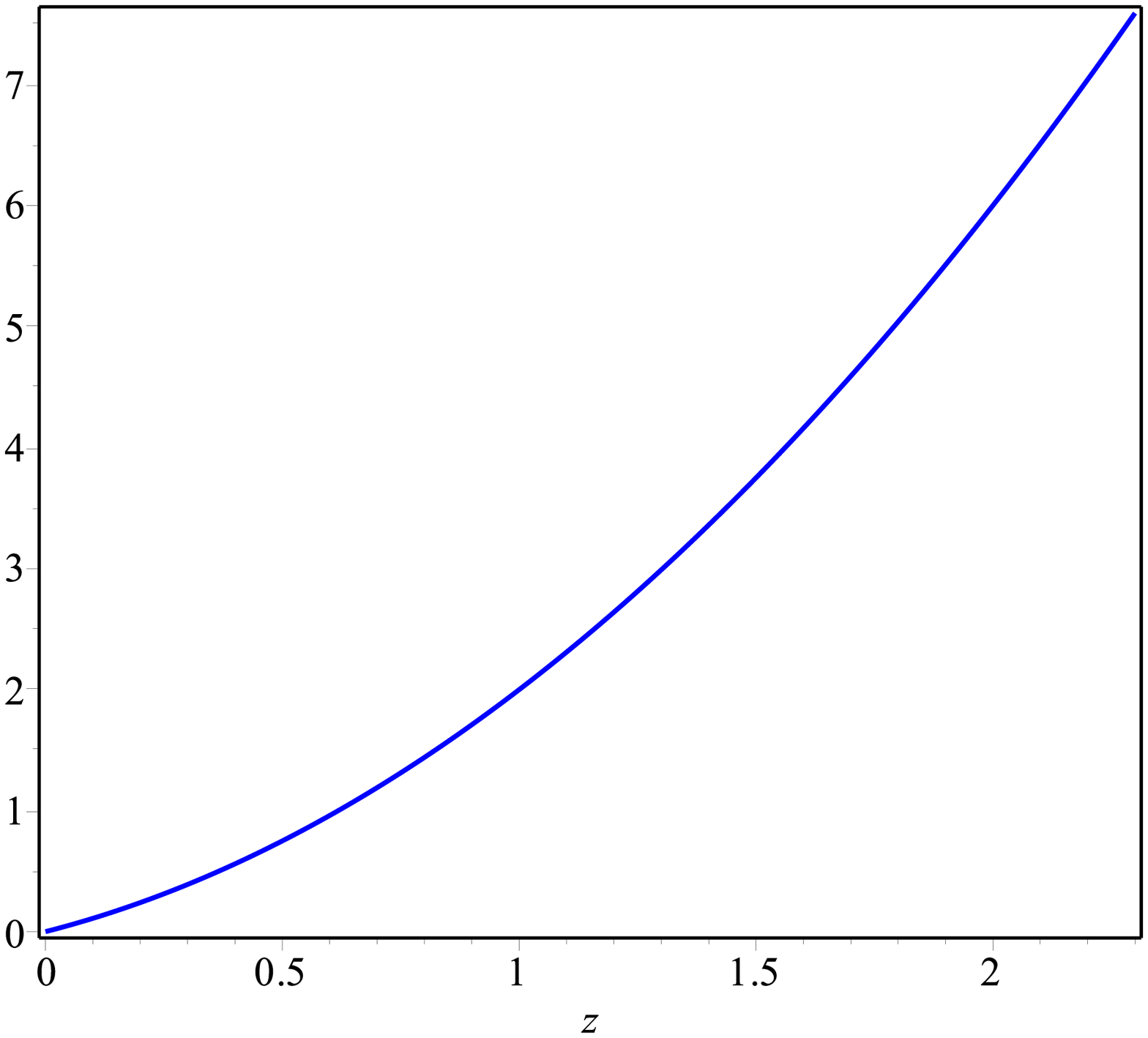}
  %\subfigure[\label{n2a}]{\label{}\includegraphics[width=0.36\textwidth]{m1.eps}}
  %\subfigure[\label{n2b}]{\label{fig:tiger}\includegraphics[width=0.315\textwidth]{m2.eps}}
  %\subfigure[\label{n2c}]{\label{fig:mouse}\includegraphics[width=0.30\textwidth]{m3.eps}}
  \caption{Graphs of $\sigma\frac{g\left(z\right)}{l\left(z\right)}-h\left(z\right)$
for the models in Example \ref{models}}\label{fig:n2}
\end{figure}
\end{example}

\begin{definition}[Equilibrium]
For given $\sigma$, an  ${\mathbf r}\in {\mathbb R}_{>0}^n$ is called an
 {\it equilibrium} if $$f_1\left({\mathbf r}\right)=\cdots=f_n\left({\mathbf
r}\right)=0.$$
\end{definition}

\begin{notation}[Jacobian]
The Jacobian of ${\mathbf f}$ is denoted by
\begin{equation*}
J_{\mathbf f}=\left[
\begin{matrix}
\frac{\partial f_1}{\partial x_1} & \cdots & \frac{\partial
f_1}{\partial
x_n} \\
\vdots & \ddots & \vdots \\
\frac{\partial f_n}{\partial x_1} & \cdots & \frac{\partial
f_n}{\partial
x_n}%
\end{matrix}
\right].
\end{equation*}
\end{notation}
\begin{definition}[Stable]\label{lastable}
An equilibrium ${\mathbf r}$ is called {\it stable} (more precisely, {\em
locally
asymptotically stable}) if all eigenvalues of $J_{\mathbf f}\left({\mathbf
r}\right)$ have strictly
negative real parts. 
\end{definition}

%\noindent
We are ready to state the problem that will be tackled in this paper.
   Informally, the problem is as follows.  For given polynomials $l,g,h$
and $P$, we have a family of MSRS parameterized by $\sigma$. We would like
to
find a partition of $\sigma$ values into several intervals so that  for all~$\sigma$
in each interval the number of (stable) equilibriums is uniform. Furthermore,
for each interval, we would like to determine the number of (stable) equilibriums.
 Now let us state the problem precisely.
\begin{problem} Devise an  algorithm with the following specification.
\medskip
\begin{enumerate}
\item[]{\bf Input:} ${\mathbf f}=\left(f_1,\ldots,f_n\right)\in \left({\mathbb
Q}\left(\sigma,{\mathbf
x}\right)\right)^n$ such
that $\dot{\mathbf x}={\mathbf f}$ is a MSRS
\item[]{\bf Output:}

$B\in \mathbb{Z}[\sigma],$\\
$I_1,\ldots,I_{w-1} \in \mathbb{IQ}_{>0}$\ (that is, closed intervals
with positive rational endpoints) and \\
$\left(e_1,s_1\right), \ldots, \left(e_w,s_w\right)\in  \mathbb{Z}^2_{\geq
0} $ \\
such that
\begin{enumerate}
\item[] $\forall j \in \{1,\ldots,w-1\}, \;B$ has one and only one real root,
say $\sigma_j$, in $I_j$,
\item[] $\sigma_1 < \cdots <\sigma_{w-1}$, and
\item[] $\forall j \in \{1,\ldots,w\}\;\;\forall v\in \left(\sigma_{j-1},
\sigma_j\right)
\;\; E_{v}=e_j\; \wedge S_{v}=s_j$
\end{enumerate}
where \begin{enumerate}
\item[]$\sigma_0=0$, $\sigma_w=\infty$,
\item[]$E_{v}$ ($S_{v}$) denotes the number of (stable) equilibriums
of  $\dot{\mathbf x}={\mathbf f}\left(v,{\mathbf x}\right)$

\end{enumerate}
\end{enumerate}
\end{problem}
\begin{example}\label{result} We illustrate the above input and output specification
by an  example, which is a specific  simultaneous decision model ($n=4$ and
$c=4$) as shown in Example~\ref{models}.

\begin{enumerate}
\item[] {\bf Input:} $f_1,f_2,f_3,f_4$\medskip\\ where
$f_k=-x_k+\frac{\sigma}
{1+x_1^4+x_2^4+x_3^4+x_4^4-x_k^4},\;\;k=1,\ldots,4$

\item[] {\bf Output:}

$B =
\left(42755090541778564453125\sigma^{24}+\cdots-140737488355328\right)\left(\sigma-4\right)^2$,
\medskip\\
$I_1 = [\frac{5}{4}, \frac{21}{16}],\;\; I_2  =[4, 4],$
\medskip\\
$\left(e_1,s_1\right)=\left(1, 1\right),\;\; \left(e_2,s_2\right)=\left(9,
5\right),\;\; \left(e_3,s_3\right)=\left(15, 4\right)$
\end{enumerate}
\medskip
By Definition \ref{cd}, the input system is
\[
\frac{dx_k}{dt}=-x_k+\frac{\sigma}
{1+x_1^4+x_2^4+x_3^4+x_4^4-x_k^4}.\]
The meaning of the  output is as  follows. Let $\sigma_1~ (\approx 1.303331342)$
be the unique positive root of $B\left(\sigma\right)=0$ in $I_1$  and $\sigma_2~(=
4)$ be the
unique positive root of $B\left(\sigma\right)=0$ in $I_2$.
  Then the  system has the following properties:
\begin{enumerate}
\item[(1)]if $0<\sigma<\sigma_1$, then the system has exactly $1$
equilibrium and the equilibrium is stable;
\item[(2)]if $\sigma_1<\sigma<\sigma_{2}$,
then the system has exactly $9$ distinct equilibriums, $5$ of which
are stable;
\item[(3)]if $\sigma_2< \sigma <\infty$, then the system has exactly $15$
distinct
equilibriums, $4$ of which
are stable.
\end{enumerate}
\end{example}

\section{Review of General Algorithm}\label{review}

In this section, we briefly review a general algorithm \cite{wx2005issac,
wx2005ab, nw2008,n2012}
for stability analysis based on real root classification. %(See Algorithm~\ref{ga}).
As stated in Section \ref{sec:1}, the general algorithm works for systems
with rational functions and
thus can be applied to solve the Problem posted in last section for MSRS
if all the involved functions,
{\it i.e.}, $l,g,h,P$, are polynomials.

Suppose we are given a system  $\dot{{\mathbf x}}={\mathbf f}\left(\sigma,{\mathbf
x}\right)$ where
$${\mathbf f}\left(\sigma,{\mathbf x}\right)=(f_1(\sigma,x_1,\ldots,x_n),\ldots,f_n(\sigma,x_1,\ldots,x_n))$$
and each
$f_k(\sigma,x_1,\ldots,x_n)$ is a rational function.
A sketch description of the general algorithm may be as follows.
\begin{enumerate}
\item Equate the numerators of all $f_k(\sigma,x_1,\ldots,x_n)$ to $0$, yielding
a system of polynomial equations.
To simplify the notations, we still use $\{f_1=0,\ldots,f_n=0\}$ to denote
the equations.
Note that there may be some constraints on the system. For example, the denominators
of all $f_k$ should be nonzero,
$\sigma$ and some variables should be positive, and so on. Therefore, we
actually obtain a semi-algebraic system. Let us
denote it by ${\mathcal S}$.

\item Compute the Hurwitz determinants $\Delta_1,\ldots,\Delta_n$ of the
Jacobian matrix $J_{\mathbf f}\left(\sigma,{\mathbf x}\right)$.
Let $\det\left(\lambda
I-J_{\mathbf f}\left(\sigma,{\mathbf x}\right)\right)=b_n\lambda^n+b_{n-1}\lambda^{n-1}+\ldots+b_0$
$\left(b_n>0\right)$, then $\Delta_1,\ldots,\Delta_n$ are defined as the
leading principal minors of
\begin{equation*}
\left[
\begin{matrix}
b_{n-1} & b_{n-3} & b_{n-5} & \ldots & b_{n-(2n-1)}\\
b_{n}   & b_{n-2} & b_{n-4} & \ldots & b_{n-(2n-2)}\\
0       & b_{n-1} & b_{n-3} & \ldots & b_{n-(2n-3)}\\
0       & b_{n}   & b_{n-2} & \ldots & b_{n-(2n-4)}\\
0       & 0       & b_{n-1} & \ldots & b_{n-(2n-5)}\\
\vdots  & \vdots  & \vdots  & \vdots & \vdots
\end{matrix}%
\right] _{n\times n}.
\end{equation*}%
By the Routh-Hurwitz Critierion, an equilibrium ${\mathbf r}$ is stable if
and only if
\[\Delta_1({\mathbf r})>0\wedge\cdots\wedge\Delta_n({\mathbf r})>0.\]
Therefore, add the constraints $\Delta_1>0,\ldots,\Delta_n>0$ to ${\mathcal
S}$ and obtain a new system~${\mathcal T}$.

\item Compute the so-called {\em border polynomial} $B(\sigma)$ of the system
${\mathcal T}$. Simply speaking,
$B(\sigma)$ is a polynomial in $\sigma$ satisfying
\[
  \left[\;
     \exists {\mathbf x} 
        \left({\mathbf f}(\sigma,{\mathbf x})=0 \;\wedge\; 
        \det\left(J_{\mathbf f}\left(\sigma,{\mathbf x}\right)\right)
        \cdot
        \prod_{k=1}^n\Delta_k(\sigma,{\mathbf x})=0)\right)
  \;\right]
  \;\;\Longrightarrow\;\; 
  B\left(\sigma\right)=0.
\]
For more details on border polynomials, please refer to \cite{yhx,wx2005issac}.

\item  Because there is only a single parameter $\sigma$, we can take a rational
sample point~$v_j$ in the open interval 
$(\sigma_j,\sigma_{j+1})$ for all $j\; \left(0\le j\le w-1\right)$ by isolating
the distinct positive roots $\sigma_1,\ldots,\sigma_{w-1}$ of 
$B(\sigma)=0,$ where $\sigma_0=0$ and $\sigma_{w}=+\infty$.

\item For each sample point $v_j$, substitute~$v_j$ for~$\sigma$ in~${\mathcal
S}$ and ${\mathcal T}$, respectively, yielding two new constant systems~${\mathcal
S}(v_j)$ and ${\mathcal T}(v_j)$. By real solution counting (or isolating)
of~${\mathcal S}(v_j)$ and ${\mathcal T}(v_j)$, respectively, we obtain the
number of equilibriums and the number of stable equilibriums of the original
system at $v_j$, respectively. By the property of $B(\sigma)$, the number
of (stable) equilibriums of the original system at~$v_j$ equals  the number
of (stable) equilibriums of the original system at any $\sigma\in (\sigma_j,\sigma_{j+1})$.
\end{enumerate}

In general, the Hurwitz determinants may be huge and thus computing them
is very time-consuming. Furthermore, huge Hurwitz determinants may cause
it infeasible in practice to compute the border polynomial of system ${\mathcal
T}$.

\section{Structure}\label{structure}
In this section, we describe certain special structures of the multi-stable
regulatory system that we will exploit in order to develop an efficient special
algorithm.
Before we plunge into the details, we  first provide an overview of the 
 special structures:
\begin{enumerate}
\item[(1)] the eigenvalues of the Jacobian at every equilibrium  are all
 real,
see Theorem~\ref{real};

\item[(2)]every  equilibrium of the system is  made up of at most two components,
see
Theorem~\ref{nd};
\item[(3)]the eigenvalues of the Jacobian at every equilibrium have  certain
nice structures, simplifying the stability analysis,  see Theorems \ref{dchar}
and \ref{ndchar} and Corollary~\ref{stable}.
\end{enumerate}

\noindent Now, we plunge into the technical details. In the discussion below,
when we say ``(stable) equilibrium", we mean (stable) equilibrium of a MSRS
 $\dot{{\mathbf x}}={\mathbf f}\left(\sigma,{\mathbf
x}\right)$.
We will use the following notations throughout this section:
\begin{align*}
a\left(\sigma,z\right) &=\sigma\frac{ g\left(z\right)}{l\left(z\right)}-h\left(z\right)
,\\
D_k\left({\mathbf x}\right) &=-\frac{P\left({\mathbf x}\right)+h\left(x_k\right)}{l\left(x_k\right)}.
\end{align*}
It is easy to see that
\[f_k\left({\mathbf x}\right)=\frac{P\left({\mathbf x}\right)-a\left(\sigma,x_k\right)}{D_k\left({\mathbf
x}\right)}.\]

\begin{theorem}[Real eigenvalues]
\label{real}
If $\mathbf{r}$ is an equilibrium, then
every eigenvalue of $J_{\mathbf f}\left(\mathbf{r}\right)$ is real.
\end{theorem}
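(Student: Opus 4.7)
My plan has three phases, each exploiting a different structural feature of the MSRS.

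First, I will simplify the Jacobian at the equilibrium. Using the factorization $f_k = (P(\mathbf{x}) - a(\sigma, x_k))/D_k(\mathbf{x})$ from the excerpt and differentiating by the product rule, the term carrying $\partial(1/D_k)/\partial x_j$ gets multiplied by $P-a$, which vanishes at the equilibrium. What survives is
$$
(J_{\mathbf{f}}(\mathbf{r}))_{kj} = \frac{P_j(\mathbf{r}) - \delta_{kj}\,a'(\sigma, r_k)}{D_k(\mathbf{r})},
$$
so $J_{\mathbf{f}}(\mathbf{r}) = -\mathrm{diag}(a'_k/D_k) + \mathbf{d}\,\mathbf{q}^{T}$, a rank-one perturbation of a diagonal matrix, where $\mathbf{q} = \nabla P(\mathbf{r})$, $d_k = 1/D_k(\mathbf{r})$, and $a'_k = a'(\sigma, r_k)$.

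Second, I will reduce the spectral problem to a $2 \times 2$ problem using symmetry. Since $f_k(\mathbf{r}) = 0$ forces $a(\sigma, r_k) = P(\mathbf{r})$ for every $k$, and since Condition 4 implies that the level set $\{z : a(\sigma, z) = P(\mathbf{r})\}$ has at most two elements (one extreme can be crossed at most twice), the entries of $\mathbf{r}$ take at most two distinct values $\alpha, \beta$ with multiplicities $m$ and $n-m$. Symmetry of $P$ then makes $P_k(\mathbf{r})$, $D_k$, and $a'_k$ constant on each orbit---call the common values $(q_\alpha, D_\alpha, a'_\alpha)$ and $(q_\beta, D_\beta, a'_\beta)$---so $J_{\mathbf{f}}$ commutes with the natural $S_m \times S_{n-m}$ action. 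On the $(m-1)$-dimensional subspace of vectors supported on $I_\alpha$ with zero sum, the rank-one term annihilates the vector, so $J_{\mathbf{f}}$ acts as the scalar $-a'_\alpha/D_\alpha$; analogously on $I_\beta$ it acts as $-a'_\beta/D_\beta$. This furnishes $n-2$ real eigenvalues and leaves only the spectrum of a $2 \times 2$ block $M$ on $\mathrm{span}(\mathbf{1}_{I_\alpha}, \mathbf{1}_{I_\beta})$.

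Finally, I need to show that the discriminant of the characteristic quadratic of $M$, which works out to
$$
\Delta = \left(\frac{mq_\alpha - a'_\alpha}{D_\alpha} - \frac{(n-m)q_\beta - a'_\beta}{D_\beta}\right)^{2} + \frac{4\,m(n-m)\,q_\alpha q_\beta}{D_\alpha D_\beta},
$$
is nonnegative. The first summand is a square, so it suffices to check $q_\alpha q_\beta/(D_\alpha D_\beta) \ge 0$. The sign of $D_\alpha D_\beta$ is easy: Condition 4 forces $l$ to be nonzero, hence of constant sign on the domain; the equilibrium relation $\sigma g(r_k) = l(r_k)(P+h(r_k))$ together with Condition 3 ($P+h>0$) gives $g(r_k)$ the same sign as $l(r_k)$; and then $D_k = -\sigma g(r_k)/l(r_k)^2$ inherits a definite sign, so $D_\alpha D_\beta > 0$. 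The main obstacle will be showing $q_\alpha q_\beta \ge 0$: the two orbit-values of $\nabla P$ must have compatible signs. I plan to attack this by relating $\nabla P(\mathbf{r})$ to the geometry of $a(\sigma, \cdot) - P$ near its two roots (whose slopes $a'_\alpha, a'_\beta$ have opposite signs because $a$ has a single extreme between them), by implicit differentiation of the self-consistency relation $P(\mathbf{r}) = a(\sigma, r_k)$ along a one-parameter family of equilibria. Once $\Delta \ge 0$ is established, the $2 \times 2$ block contributes two real eigenvalues and the theorem follows.
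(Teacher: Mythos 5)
Your first two phases are sound and, modulo packaging, match facts the paper also establishes: the rank-one-plus-diagonal form of $J_{\mathbf f}(\mathbf r)$ at an equilibrium is exactly what the paper computes at the start of its proof, and the two-value structure of $\mathbf r$ is the paper's Theorem~\ref{nd}, whose proof is independent of Theorem~\ref{real}, so there is no circularity in invoking it. The $n-2$ eigenvalues $-a'_\alpha/D_\alpha$ and $-a'_\beta/D_\beta$ on the zero-sum subspaces are manifestly real, the reduction to a $2\times 2$ block is correct, and your discriminant formula checks out. The problem is that the proof then stops at its crux. The inequality $q_\alpha q_\beta \ge 0$ --- that $\partial P/\partial x_i(\mathbf r)$ has the same sign on the two orbits --- is exactly what makes the discriminant nonnegative, and you offer only a ``plan'' for it. That plan does not work as described: for a fixed $\sigma$ the equilibria are (generically) isolated points, so there is no one-parameter family along which to differentiate the self-consistency relation; and in any case the sign of $\nabla P$ at $\mathbf r$ is a pointwise property of $P$, while the slopes $a'_\alpha, a'_\beta$ encode information about $a=\sigma g/l-h$, a different function. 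Nothing in Definition~\ref{cd} as stated forces $\nabla P$ to have constant sign (it holds in all three examples, where $P$ is coordinatewise increasing, but it is not an axiom), so this step needs an actual argument or an additional hypothesis, and without it the proof is incomplete.

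For comparison, the paper takes a shorter route that avoids the orbit decomposition entirely: it symmetrizes the whole Jacobian, setting $C=EA$ with $E$ diagonal, $E_{ii}=\frac{\partial P}{\partial x_i}(\mathbf r)\big/\prod_{k\neq i}D_k(\mathbf r)$, verifies $C=C^{T}$, and runs the generalized symmetric eigenvalue argument $\lambda\,\alpha^{*}E\alpha=\lambda^{*}\alpha^{*}E\alpha$. Note, however, that this argument needs $\alpha^{*}E\alpha\neq 0$, i.e., definiteness of $E$, which (since the $D_k(\mathbf r)$ all share one sign, as you correctly showed) is precisely the statement that all $\frac{\partial P}{\partial x_i}(\mathbf r)$ have one sign --- the very fact you deferred. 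So you have not chosen an intrinsically harder path; you have simply isolated, and then failed to discharge, the one sign condition on which both approaches turn. To finish, either prove that condition for the class of $P$ you intend to cover, or fold it into the hypotheses.
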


\begin{proof}
Let $\mathbf{r}$ be an equilibrium and $A=J_{\mathbf f}\left(\mathbf{r}\right)$.
For every $k$, let
\[ N_{k}({\mathbf {x}}) =P\left({\mathbf x}\right)-a\left(\sigma, x_{k}\right).\]
Then for any $i,j$,
\begin{equation*}
A_{i,j}=%
\begin{cases}
\frac{\partial f_{i}}{\partial x_{i}}\left(\mathbf{r}\right) & i=j \\ \\
\frac{\frac{\partial N_{i}}{\partial x_{j}}\left(\mathbf{r}\right)D_{i}\left(
\mathbf{r}%
\right) -N_{i}\left(\mathbf{r}\right)\frac{\partial D_{i}}{\partial x_{j}}\left(\mathbf{r}\right)}{%
D_{i}\left( \mathbf{r}\right) ^{2}} & i\neq j%
\end{cases}%
.
\end{equation*}%
Since $\mathbf{r}$ is an equilibrium, we have $N_{i}\left(\mathbf{r}\right)=0$
for any
$i$. Hence,
\begin{equation*}
A_{i,j}=%
\begin{cases}
\frac{\partial f_{i}}{\partial x_{i}}\left(\mathbf{r}\right) & i=j \\ \\
\frac{\frac{\partial N_{i}}{\partial x_{j}}\left(\mathbf{r}\right)}{D_{i}\left(
\mathbf{%
r}\right) } & i\neq j%
\end{cases}%
=%
\begin{cases}
\frac{\partial f_{i}}{\partial x_{i}}\left(\mathbf{r}\right) & i=j \\ \\
\frac{\frac{\partial P}{\partial x_{j}}\left(\mathbf{r}\right)}{D_{i}\left(
\mathbf{r}%
\right) } & i\neq j%
\end{cases}.%
\end{equation*}%
Let $E$ be the $n\times n$ diagonal matrix such that
\begin{equation*}
E_{i,i}=\frac{\frac{\partial P}{\partial x_{i}}\left(\mathbf{r}\right)}{\Pi
_{k\neq
i}D_{k}\left( \mathbf{r}\right) }.
\end{equation*}
Let $C=EA$. Then for any $i,j$ such that $i\neq j$, we have
\begin{align*}
C_{i,j} =&E_{i,i}A_{i,j}=\frac{\frac{\partial P}{\partial x_{i}}(
\mathbf{r})}{\Pi _{k\neq i}D_{k}\left( \mathbf{r}\right)}\cdot 
\frac{\frac{\partial P}{\partial x_{j}}(
\mathbf{r})}{D_{i}\left( \mathbf{r}\right) } =\frac{\frac{\partial P}{\partial
x_{i}}\left(\mathbf{r}\right)
\frac{\partial P}{\partial x_{j}}\left(\mathbf{r}\right)}{\Pi _{k=1}^nD_{k}\left(
\mathbf{r}
\right) } ,\\
C_{j,i} =&E_{j,j}A_{j,i}=\frac{\frac{\partial P}{\partial x_{j}}(
\mathbf{r})}{\Pi _{k\neq j}D_{k}\left( \mathbf{r}\right) }\cdot
\frac{\frac{\partial P}{\partial x_{i}}(
\mathbf{r})}{D_{j}\left( \mathbf{r}\right) } =\frac{\frac{\partial P}{\partial
x_{j}}\left(\mathbf{r}\right)\frac{
\partial P}{\partial x_{i}}\left(\mathbf{r}\right)}{\Pi _{k=1}^nD_{k}\left(
\mathbf{r}
\right)}.
\end{align*}
Thus $C_{i,j}=C_{j,i}$. Hence $C$ is a real symmetric matrix.

Let $\lambda $ be an eigenvalue of $A$ and $\alpha $ a corresponding
eigenvector, namely $A\alpha =\lambda \alpha $. Then
$C\alpha =EA\alpha =\lambda E\alpha.$ By taking conjugate transpose, we have
\begin{equation*}
\alpha ^{\ast }C^{\ast }=\lambda ^{\ast }\alpha ^{\ast }E^{\ast }.
\end{equation*}%
Since both $E$ and $C$ are real symmetric, we have
$\alpha ^{\ast }C=\lambda ^{\ast }\alpha ^{\ast }E.$
Therefore,
%\begin{equation*}
$\alpha ^{\ast }C\alpha =\lambda ^{\ast }\alpha ^{\ast }E\alpha$
%\end{equation*}%
and hence
\begin{equation*}
\lambda \alpha ^{\ast }E\alpha =\lambda ^{\ast }\alpha ^{\ast }E\alpha .
\end{equation*}%
Since $\alpha ^{\ast }E\alpha $ is non-zero, we have $\lambda =\lambda
^{\ast }$. In other words, $\lambda $ is real.
\end{proof}

\begin{theorem}[Structure of equilibrium]
\label{nd} Let ${\mathbf r}=\left(r_1,\ldots,r_n\right)$ be an equilibrium.
The components of  ${\mathbf r}$ consist of at most two different numbers.
\end{theorem}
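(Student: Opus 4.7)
The plan is to reduce the equilibrium condition to the statement that each $r_k$ is a root of the same one-variable equation in $z$, and then invoke condition (4) of Definition~\ref{cd} to bound the number of roots of that equation by two.

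First, I would unpack $f_k(\mathbf{r})=0$. Since $l(r_k)\neq 0$ and $P(\mathbf{r})+h(r_k)>0$ by conditions (3) and (4) of the definition of an MSRS, the equation
\[
-l(r_k)+\sigma\frac{g(r_k)}{P(\mathbf{r})+h(r_k)}=0
\]
can be rearranged to
\[
P(\mathbf{r})=\sigma\frac{g(r_k)}{l(r_k)}-h(r_k)=a(\sigma,r_k).
\]
Crucially, the left-hand side $P(\mathbf{r})$ does not depend on the index $k$, so setting $c:=P(\mathbf{r})$ we obtain that every component $r_k$ is a solution of the single one-variable equation $a(\sigma,z)=c$.

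Next I would argue that this equation has at most two solutions. By condition~(4), $a(\sigma,z)$ has at most one extreme point on its intended domain. This means $a(\sigma,z)$ is either monotonic on the whole domain (in which case $a(\sigma,z)=c$ has at most one solution) or is monotonic on each of the two subintervals into which the extremum splits the domain (in which case $a(\sigma,z)=c$ has at most one solution on each subinterval, hence at most two in total). Either way, the set $\{z : a(\sigma,z)=c\}$ has cardinality at most two.

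Since each $r_k$ lies in this set, the multiset $\{r_1,\ldots,r_n\}$ takes at most two distinct values, proving the theorem. I do not anticipate a real obstacle; the only subtle point is using condition~(4) carefully to bound the number of preimages of a value by the number of monotonicity intervals of $a(\sigma,\cdot)$, but this is a routine consequence of having at most one extremum.
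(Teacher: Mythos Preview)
Your proof is correct and follows essentially the same approach as the paper: both reduce the equilibrium condition to $a(\sigma,r_k)=P(\mathbf{r})$ for every $k$, and then invoke condition~(4) of Definition~\ref{cd} to conclude that this one-variable equation has at most two solutions. Your version is slightly more explicit about the monotonicity argument linking ``at most one extremum'' to ``at most two preimages,'' but the underlying idea is identical.
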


\begin{proof}

For every  $k$, we have%
\begin{align*}
f_k\left({\mathbf r}\right)=\frac{P\left({\mathbf r}\right)-a\left(\sigma,r_k\right)}{D_k\left({\mathbf
r}\right)} & =0.
\end{align*}
Thus%
\[
a\left(\sigma,  r_{1}\right)  =\cdots=a\left(\sigma,  r_{n}\right)  =P\left({\mathbf
r}\right).
\]
  Note that, for every $\sigma$,  the function $a\left(\sigma, z\right)$
has at most
one extreme point  for $z$ over ${\mathbb R}_{>0}$ by Definition \ref{cd}.
Thus for every real number $\varrho$,  the equation
$a\left(\sigma,  z\right)  =\varrho$ has at most two different positive solutions
in $z$.
Hence $r_1,\ldots,r_n$ consist of at most two different positive numbers.
\end{proof}

\noindent
From now on, we will say that an equilibrium ${\mathbf r}$ is \emph{diagonal}
if $r_1=\cdots=r_n$.

\begin{theorem}[Characteristic polynomial for diagonal equilibrium ]
\label{dchar} Let ${\mathbf r}$ be a diagonal equilibrium $\left(q,\ldots,q\right)$.
Then
\begin{equation*}
\det\left(\lambda I-J_{\mathbf f}\left({\mathbf r}\right)\right)=\left(\lambda
-G_1\right)^{n-1}\left(\lambda -G_2\right).
\end{equation*}%
where
%\begin{eqnarray*}
\[G_1 = \tau-\xi,\]
\[G_2 = \tau+(n-1)\xi.\]
%\end{eqnarray*}
where again
\[\tau   = \frac{\partial f_n}{\partial x_n}\left({\mathbf r}\right),\;\;\;\;
\xi    = \frac{\frac{\partial P}{\partial x_{n-1}}}{D_n}\left({\mathbf r}\right).\]
\end{theorem}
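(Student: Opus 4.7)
The plan is to start from the explicit formula for $J_{\mathbf{f}}(\mathbf{r})$ derived inside the proof of Theorem~\ref{real}, specialize it to a diagonal equilibrium, and exploit the symmetry of $P$ together with the fact that all coordinates of $\mathbf{r}$ coincide to show that the Jacobian has the very rigid form ``constant on the diagonal, another constant off the diagonal.'' Once this is established, the eigenvalue computation is a routine spectral decomposition.

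First I would recall from the proof of Theorem~\ref{real} that, since $\mathbf{r}$ is an equilibrium,
\begin{equation*}
A_{i,j} \;=\; \begin{cases} \dfrac{\partial f_{i}}{\partial x_{i}}(\mathbf{r}) & i=j, \\[4pt] \dfrac{\frac{\partial P}{\partial x_{j}}(\mathbf{r})}{D_{i}(\mathbf{r})} & i\neq j, \end{cases}
\end{equation*}
where $A = J_{\mathbf{f}}(\mathbf{r})$. Because $\mathbf{r}=(q,\ldots,q)$, the quantities $D_{i}(\mathbf{r})$ and $\frac{\partial f_{i}}{\partial x_{i}}(\mathbf{r})$ do not depend on $i$ (each $f_{k}$ has the same functional form in its own $x_{k}$ and in the symmetric argument $P$), so every diagonal entry equals $\tau$. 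By the symmetry of $P$ assumed in Definition~\ref{cd}, differentiating the identity $P(\ldots,x_{i},\ldots,x_{j},\ldots)=P(\ldots,x_{j},\ldots,x_{i},\ldots)$ and evaluating at the symmetric point $(q,\ldots,q)$ yields $\frac{\partial P}{\partial x_{i}}(\mathbf{r}) = \frac{\partial P}{\partial x_{j}}(\mathbf{r})$ for all $i,j$. Combined with the $i$-independence of $D_{i}(\mathbf{r})$, this shows every off-diagonal entry equals $\xi$.

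Next I would write $A$ as the rank-one perturbation of a scalar matrix,
\begin{equation*}
A \;=\; (\tau-\xi)\, I_{n} \;+\; \xi\, \mathbf{1}\mathbf{1}^{T},
\end{equation*}
where $\mathbf{1}\in\mathbb{R}^{n}$ is the all-ones column vector. Since $\mathbf{1}\mathbf{1}^{T}$ has eigenvalue $n$ on the line spanned by $\mathbf{1}$ and eigenvalue $0$ on its orthogonal complement, the eigenvalues of $A$ are $\tau-\xi = G_{1}$ with multiplicity $n-1$ and $(\tau-\xi)+n\xi = \tau+(n-1)\xi = G_{2}$ with multiplicity $1$. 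Hence $\det(\lambda I - A) = (\lambda-G_{1})^{n-1}(\lambda-G_{2})$.

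The only step that needs a careful argument (rather than pure calculation) is the symmetry reduction that collapses all $\frac{\partial P}{\partial x_{j}}(\mathbf{r})$ to a single value and all diagonal entries to a single value; the rest is the classical spectrum of $aI+bJ$. No genuine obstacle is expected.
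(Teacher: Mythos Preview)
Your proposal is correct and follows essentially the same approach as the paper: reduce the Jacobian at a diagonal equilibrium to the form $(\tau-\xi)I+\xi\,\mathbf{1}\mathbf{1}^{T}$ via the symmetry of $P$ and of the $f_k$, and then read off the characteristic polynomial. The only cosmetic difference is that the paper finishes by applying Sylvester's determinant theorem to compute $\det(\lambda I - A)$ directly, whereas you invoke the eigenvalue decomposition of $\mathbf{1}\mathbf{1}^{T}$; both are standard and equivalent.
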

\begin{proof}
Note for any $i,j$,
\begin{align*}
 f_i(x_1,\ldots,x_i,\ldots,x_j,\ldots,x_n)=f_j(x_1,\ldots,x_j,\ldots,x_i,\ldots,x_n),\\
 P(x_1,\ldots,x_i,\ldots,x_j,\ldots,x_n)=P(x_1,\ldots,x_j,\ldots,x_i,\ldots,x_n).
\end{align*}
Thus,
\begin{align*}
\frac{\partial f_i}{\partial x_i}(x_1,\ldots,x_i,\ldots,x_j,\ldots,x_n)=\frac{\partial
f_j}{\partial
x_j}(x_1,\ldots,x_j,\ldots,x_i,\ldots,x_n),\\
\frac{\partial P}{\partial x_i}(x_1,\ldots,x_i,\ldots,x_j,\ldots,x_n)=\frac{\partial
P}{\partial
x_j}(x_1,\ldots,x_j,\ldots,x_i,\ldots,x_n).
\end{align*}
Hence,
\begin{align*}
\frac{\partial f_i}{\partial x_i}\left({\mathbf r}\right)=\frac{\partial
f_i}{\partial x_i}(q,\ldots,q)=\frac{\partial f_j}{\partial
x_j}(q,\ldots,q)=\frac{\partial f_j}{\partial
x_j}\left({\mathbf r}\right),\\
\frac{\partial P}{\partial x_i}\left({\mathbf r}\right)=\frac{\partial P}{\partial
x_i}(q,\ldots,q)
=\frac{\partial P}{\partial x_j}(q,\ldots,q)=\frac{\partial P}{\partial x_j}\left({\mathbf
r}\right).
\end{align*}
Note also for any $i,j$,
\[D_i\left({\mathbf r}\right)=D_i(q,\ldots,q)=D_j(q,\ldots,q)=D_j\left({\mathbf
r}\right).\]
Therefore
\begin{equation*}
\begin{array}{ccc}
J_{\mathbf f}\left({\mathbf r}\right)=\left[
\begin{matrix}
\tau & \xi & \ldots & \xi \\
\xi & \tau & \ldots & \xi \\
\vdots & \vdots & \ddots & \vdots \\
\xi &\xi & \ldots & \tau%
\end{matrix}%
\right] _{_{n\times n}}
\end{array}.
\end{equation*}
Note
\begin{equation*}
J_{\mathbf f}\left({\mathbf r}\right)
=\left(\tau-\xi\right) I+\xi u^{T}u.
\end{equation*}
where $u=\left[
\begin{array}{ccc}
1 & \cdots & 1%
\end{array}%
\right] .$ Hence,
\begin{align*}
&\det\left(\lambda I-J_{\mathbf f}\left({\mathbf r}\right)\right)\\
&=\det \left( \lambda I-\left( \tau-\xi\right) I-\xi u^{T}u\right) \\
& =\det \left( \left( \lambda -\left(\tau-\xi\right)\right) I-\xi u^{T}u\right)
\\
& =\left(
\lambda -\left(\tau-\xi\right)\right) ^{n}\det \left( I-\frac{\xi}{\lambda
-\left(\tau-\xi\right)}u^{T}u\right)
 \\
 &=\left(
\lambda -\left(\tau-\xi\right)\right) ^{n}\left( 1-\frac{\xi}{\lambda -\left(\tau-\xi\right)}uu^{T}\right)\;
\text{(Sylvester's determinant theorem)}
\\
 &=\left( \lambda -\left(\tau-\xi\right)\right) ^{n}\left( 1-\frac{\xi}{\lambda
-\left(\tau-\xi\right)}n\right)  \\
& =\left(\lambda -\left(\tau-\xi\right)\right)^{n-1}\left(\lambda -\left(\tau+(n-1)\xi\right)
\right) \\
 &= \left(\lambda -G_1\right)^{n-1}\left(\lambda -G_2\right).
\end{align*}
\end{proof}

\begin{theorem}[Characteristic polynomial for non-diagonal equilibrium]
\label{ndchar} Let  ${\mathbf r}$ be a non-diagonal equilibrium. Let $p$
and $q$ appear in ${\mathbf r}$ respectively $i$
times and $n-i$ times, where $1\leq i\leq \lfloor\frac{n}{2}\rfloor$.  Then
\begin{equation*}
\det\left(\lambda I-J_{\mathbf f}\left({\mathbf r}\right)\right)=\left(\lambda-G_1\right)^{n-i-1}\left(\lambda-G_2\right)^{i-1}\left(\lambda^2-G_3\lambda+G_4\right),
\end{equation*}
where
%\begin{eqnarray*}
\[G_1 =\tau-\xi,\]
\[G_2=\beta-\gamma,\]
\[G_3=\beta+\tau+\left(i-1\right)\gamma+\left(n-i-1\right)\xi,\]
\[G_4 =\left(\beta+\left(i-1\right)\gamma\right)\left(\tau+\left(n-i-1\right)\xi\right)-i\left(n-i\right)\mu\nu,\]
%\end{eqnarray*}
where again
\[
\beta  = \frac{\partial f_1}{\partial x_1}\left({\mathbf r}\right),\;\;
\tau   = \frac{\partial f_n}{\partial x_n}\left({\mathbf r}\right),\;\;
\gamma = \frac{\frac{\partial P}{\partial x_2}}{D_1}\left({\mathbf r}\right),\;\;
\xi    = \frac{\frac{\partial P}{\partial x_{n-1}}}{D_n}\left({\mathbf r}\right),\;\;
\mu    = \frac{\frac{\partial P}{\partial x_n}}{D_1}\left({\mathbf r}\right),\;\;
\nu    = \frac{\frac{\partial P}{\partial x_1}}{D_n}\left({\mathbf r}\right).
\]
\end{theorem}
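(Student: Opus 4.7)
The plan is to mimic the proof of Theorem~\ref{dchar}, but now exploiting a finer block structure. By the symmetry of $P$, the components of $\mathbf{r}$ can be permuted without changing the characteristic polynomial, so I would assume without loss of generality that $r_1=\cdots=r_i=p$ and $r_{i+1}=\cdots=r_n=q$. Then, using exactly the same symmetry arguments as in the diagonal case, every partial derivative $\partial f_k/\partial x_k$ or $\partial P/\partial x_k$ evaluated at $\mathbf{r}$ depends only on which of the two equivalence classes the index $k$ (and the differentiation index) lies in. This collapses the Jacobian to the four-parameter block form
\begin{equation*}
J_{\mathbf f}(\mathbf{r})
=\begin{bmatrix}
(\beta-\gamma)I_{i}+\gamma\,\mathbf{1}_{i}\mathbf{1}_{i}^{T} & \mu\,\mathbf{1}_{i}\mathbf{1}_{n-i}^{T}\\[2pt]
\nu\,\mathbf{1}_{n-i}\mathbf{1}_{i}^{T} & (\tau-\xi)I_{n-i}+\xi\,\mathbf{1}_{n-i}\mathbf{1}_{n-i}^{T}
\end{bmatrix},
\end{equation*}
where $\beta,\gamma,\tau,\xi,\mu,\nu$ are the quantities in the theorem statement, obtained from $N_k$ and $D_k$ just as in the proof of Theorem~\ref{real}.

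Next I would diagonalize by splitting $\mathbb{R}^{n}$ into three $J_{\mathbf f}(\mathbf{r})$-invariant subspaces. Let $U_{1}$ be the space of vectors supported on the first $i$ coordinates whose sum is zero ($\dim U_1=i-1$), and $U_{2}$ the analogous space on the last $n-i$ coordinates ($\dim U_2=n-i-1$). On $U_{1}$ the all-ones matrices act as $0$, so $J_{\mathbf f}(\mathbf{r})$ restricts to $(\beta-\gamma)I$, giving the factor $(\lambda-G_{2})^{i-1}$; the same argument on $U_{2}$ gives $(\lambda-G_{1})^{n-i-1}$.

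The remaining invariant subspace is the two-dimensional span of $e_{1}:=(\mathbf{1}_{i},\mathbf{0})$ and $e_{2}:=(\mathbf{0},\mathbf{1}_{n-i})$. A direct computation shows
\begin{equation*}
J_{\mathbf f}(\mathbf{r})\,e_{1}=\bigl(\beta+(i-1)\gamma\bigr)e_{1}+i\nu\,e_{2},\quad
J_{\mathbf f}(\mathbf{r})\,e_{2}=(n-i)\mu\,e_{1}+\bigl(\tau+(n-i-1)\xi\bigr)e_{2},
\end{equation*}
so in this basis $J_{\mathbf f}(\mathbf{r})$ is represented by a $2\times 2$ matrix with trace $G_{3}$ and determinant $G_{4}$, contributing the factor $\lambda^{2}-G_{3}\lambda+G_{4}$. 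Multiplying the three contributions yields the claimed characteristic polynomial. The routine but slightly delicate part is verifying the cross-block entries and the diagonal entries against the definitions of $\beta,\tau,\gamma,\xi,\mu,\nu$; the main conceptual step is the direct-sum decomposition, which is forced once the symmetry of $P$ has been used to reduce the Jacobian to the four-parameter block form.
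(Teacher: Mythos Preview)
Your proof is correct and reaches the same factorization, but by a genuinely different route than the paper. After writing down the same $2\times 2$ block form of $J_{\mathbf f}(\mathbf r)$, the paper computes $\det(\lambda I-J_{\mathbf f}(\mathbf r))$ by Laplace expansion along the first $i$ rows: the principal term $\det(\lambda I-E)\det(\lambda I-F)$ is evaluated using the diagonal-case formula from Theorem~\ref{dchar}, and each of the $i(n-i)$ remaining minor--cofactor pairs $M_{k,\omega}A_{k,\omega}$ is shown to collapse to $-\mu\nu(\lambda-G_2)^{i-1}(\lambda-G_1)^{n-i-1}$, after which everything is summed.

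Your approach instead exhibits a $J_{\mathbf f}(\mathbf r)$-invariant direct-sum decomposition $U_1\oplus U_2\oplus\mathrm{span}(e_1,e_2)=\mathbb{R}^n$ and reads off the characteristic polynomial as the product of the characteristic polynomials on the three pieces. This is more conceptual: it bypasses the sign and index bookkeeping of the Laplace expansion and, as a bonus, makes the eigenspaces for $G_1$ and $G_2$ explicit. The paper's determinantal calculation, on the other hand, stays entirely within matrix identities and needs no invariance verification. Both arguments are short once the symmetry of $P$ has been used to reduce the Jacobian to the four-parameter block form; that reduction is the common essential step.
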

\begin{proof}
Without loss of generality,  suppose that $r_{1}=\cdots =r_{i}=p$ and $r_{i+1}=\cdots
=r_{n}=q$. By symmetry, we have
\begin{equation*}
J_{\mathbf f}\left({\mathbf r}\right)=\left[
\begin{matrix}
E & S \\
T & F%
\end{matrix}%
\right] _{n\times n},
\end{equation*}%
where
\begin{equation*}
\begin{array}{ccc}
E=\left[
\begin{matrix}
\beta & \gamma  & \ldots & \gamma \\
\gamma & \beta & \ldots & \gamma \\
\vdots & \vdots & \ddots & \vdots \\
\gamma & \gamma & \ldots & \beta%
\end{matrix}%
\right] _{_{i\times i}} &  & F=\left[
\begin{matrix}
\tau & \xi & \ldots & \xi \\
\xi & \tau & \ldots & \xi \\
\vdots & \vdots & \ddots & \vdots \\
\xi & \xi & \ldots & \tau%
\end{matrix}%
\right] _{(n-i)\times (n-i)} \\
&  &  \\
S=\mu\left[
\begin{matrix}
1 & 1 & \ldots & 1 \\
1 & 1 & \ldots & 1 \\
\vdots & \vdots & \ddots & \vdots \\
1 & 1 & \ldots & 1%
\end{matrix}%
\right] _{i\times (n-i)} &  & T=\nu\left[
\begin{matrix}
1 & 1 & \ldots & 1 \\
1 & 1 & \ldots & 1 \\
\vdots & \vdots & \ddots & \vdots \\
1 & 1 & \ldots & 1%
\end{matrix}%
\right] _{(n-i)\times i}%
\end{array}.
\end{equation*}%
From Laplace's Theorem, we have%
\begin{equation*}
\det\left(\lambda I-J_{\mathbf f}\left({\mathbf r}\right)\right)=\left(-1\right)^{2\left(1+2+\ldots
+i\right)}\det\left(\lambda I-E\right)\det
\left(\lambda I-F\right)+\Sigma _{k=1}^{i}\Sigma _{\omega=1}^{n-i}M_{k,\omega}A_{k,\omega},
\end{equation*}%
where $M_{k,\omega}$ is the minor of $\lambda I-J_{\mathbf f}\left({\mathbf
r}\right)$ consisting of the first
$i$
rows and the columns indexed by
\begin{equation*}
1,2,\ldots ,k-1,k+1,\ldots ,i,i+\omega
\end{equation*}%
and $A_{k,\omega}$ is the cofactor of $M_{k,\omega}$. By the same reasoning
as that
in
the proof of Theorem~\ref{dchar}, we have
\begin{equation*}
\det\left(\lambda I-E\right)=\left( \lambda -\left(\beta+(i-1)\gamma\right)\right)\left(\lambda-G_2\right)^{i-1}
\end{equation*}%
and
\begin{equation*}
\det\left(\lambda I-F\right)=
\left( \lambda -\left(\tau+(n-i-1)\xi\right)\right)\left(\lambda-G_1\right)^{n-i-1}.
\end{equation*}%
It is not difficult to check that
\begin{align*}
M_{k,\omega} &=\left(-1\right)^{i-k+1}\mu \left(\lambda-G_2\right)^{i-1}
,\\
A_{k,\omega} &=\left(-1\right)^{2(1+2+\cdots +i)-k+2\omega+i}\nu \left(\lambda-G_1\right)^{n-i-1}.
\end{align*}%
Hence
\begin{equation*}
\det\left(\lambda I-J_{\mathbf f}\left({\mathbf r}\right)\right)=\left(\lambda-G_1\right)^{n-i-1}\left(\lambda-G_2\right)^{i-1}\left(\lambda^2-G_3\lambda+G_4\right).
\end{equation*}
\end{proof}

\begin{corollary}[Stability of equilibrium]
\label{stable} Let ${\mathbf r}$ be an equilibrium.  Then

\begin{enumerate}
\item[(1)]Case:  ${\mathbf r}$ is diagonal $\left(q,\ldots,q\right)$. Then
${\mathbf r}$   is stable if and only if
\[G_1 <0 \land G_2 <0,\]
where
$G_1$ and $G_2$ are defined as in Theorem~\ref{dchar}.
\item[(2)] Case: ${\mathbf r}$ is non-diagonal such that $p$ appears once
and $q$ appears $n-1$ times. Then 
\begin{enumerate}
\item[(2a)]if $n=2$, then 
${\mathbf r}$   is stable if and only if
\[G_3 <0 \land G_4 >0;\]
\item[(2b)]if $n>2$, then
${\mathbf r}$   is stable if and only if
\[G_1 <0 \land G_3 <0 \land G_4 >0,\]
\end{enumerate}
where $G_1,G_3,G_4$ are defined as in Theorem \ref{ndchar}.
  \item[(3)] Case: ${\mathbf r}$ is non-diagonal such that $p$ appears $i$
times and $q$ appears $n-i$ times where $2\leq i\leq \lfloor\frac{n}{2}\rfloor$.
 Then ${\mathbf r}$   is stable if and only if
\[G_1 <0 \land G_2 <0 \land G_3 <0 \land G_4 >0,\]
where $G_1,G_2,G_3,G_4$ are defined as in Theorem \ref{ndchar}.
\end{enumerate}
\end{corollary}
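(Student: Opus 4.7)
My approach is to reduce the stability criterion in Definition~\ref{lastable}, which is phrased in terms of real parts of eigenvalues, to a purely real statement about the signs of the eigenvalues themselves. Theorem~\ref{real} supplies the key reduction: since every eigenvalue of $J_{\mathbf f}\left({\mathbf r}\right)$ is real, ${\mathbf r}$ is stable if and only if every root of $\det\left(\lambda I - J_{\mathbf f}\left({\mathbf r}\right)\right)$ is a strictly negative real number. Theorems~\ref{dchar} and~\ref{ndchar} already give explicit factorizations of this characteristic polynomial, so the proof will just decide, for each case, when each factor contributes only negative roots.

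For each linear factor $(\lambda-G_k)^{m_k}$ with $m_k \ge 1$, the contribution is the single eigenvalue $G_k$, so the required condition is simply $G_k < 0$; factors with $m_k = 0$ can be dropped from the conjunction. For the quadratic factor $\lambda^2 - G_3\lambda + G_4$ that appears in the non-diagonal case, I will invoke Theorem~\ref{real} a second time to see that its two roots are real, and then apply Vieta's formulas: the sum of the roots equals $G_3$ and the product equals $G_4$, so both roots are strictly negative if and only if $G_3 < 0$ and $G_4 > 0$.

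What remains is case-by-case bookkeeping of which exponents are nonzero. In Case~(1) both exponents $n-1$ and $1$ are positive, yielding $G_1 < 0 \wedge G_2 < 0$. In Case~(2) the assumption $i = 1$ makes the $(\lambda - G_2)$ exponent $i-1$ vanish, so $G_2$ never enters; the $(\lambda - G_1)$ exponent $n-i-1 = n-2$ vanishes precisely when $n = 2$, giving subcase~(2a) with only the quadratic conditions, and is positive when $n > 2$, giving subcase~(2b) with the extra constraint $G_1 < 0$. In Case~(3) the hypothesis $2 \le i \le \lfloor n/2 \rfloor$ forces $i - 1 \ge 1$ and $n - i - 1 \ge i - 1 \ge 1$, so both linear factors are nontrivial and all four inequalities appear.

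The argument is essentially elementary once Theorem~\ref{real} and the factorizations are in hand; the only place where care is needed is the tracking of the vanishing exponents in the non-diagonal factorization, which is exactly what produces the three-way split of the statement.
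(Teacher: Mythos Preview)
Your proposal is correct and follows essentially the same approach as the paper: invoke Theorem~\ref{real} so that stability reduces to all eigenvalues being strictly negative reals, read off the eigenvalues from the factorizations in Theorems~\ref{dchar} and~\ref{ndchar}, and handle the quadratic factor via Vieta's formulas. Your explicit tracking of which exponents $n-1$, $n-i-1$, $i-1$ vanish is exactly the case split the paper performs, just stated a bit more systematically.
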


\begin{proof}
\mbox{}

\begin{enumerate}
\item[(1)]Case:  ${\mathbf r}$ is diagonal $\left(q,\ldots,q\right)$.  From
Theorem~\ref{dchar}, the eigenvalues  of $J_{\mathbf f}\left({\mathbf r}\right)$
are%
\begin{align*}
& \lambda _{1}=\cdots =\lambda _{n-1}=G_1, \\
& \lambda _{n}=G_2.
\end{align*}%
 From Definition \ref{lastable}, the conclusion follows immediately.
\item[(2)] Case: ${\mathbf r}$ is non-diagonal such that $p$ appears once
and $q$ appears $n-1$ times.  
\begin{enumerate}
\item[(2a)]If $n=2$, from Theorem \ref{ndchar}, $\lambda_1$ and $\lambda_2$,
 the eigenvalues of $J_{{\mathbf
f}}\left({\mathbf r}\right)$, 
are the two solutions of $\lambda^2-G_3\lambda+G_4=0$.
Note
\[\lambda_1+\lambda_2=G_3,\]
\[\lambda_1\lambda_2=
G_4.\]
By Theorem
\ref{real},  both $\lambda_{1}$ and $\lambda_2$ are real. Hence,
$\lambda_{1}<0$ and $\lambda_{2}<0$ if and only if $\lambda_{1}+\lambda_{2}<0$
and $%
\lambda_{1}\lambda_{2}>0$.  From Definition \ref{lastable}, the conclusion
follows immediately.
\item[(2b)]If $n>2$, from Theorem \ref{ndchar}, the eigenvalues of $J_{{\mathbf
f}}\left({\mathbf r}\right)$ are \[
\lambda_1=\cdots=\lambda_{n-2}=G_{1}\]
and
\begin{center}
$\lambda_{n-1}$ and $\lambda_n$
are the two solutions of $\lambda^2-G_3\lambda+G_4=0$.
\end{center} Note
\[\lambda_{n-1}+\lambda_n=G_3,\]
\[\lambda_{n-1}\lambda_{n}=
G_4.\]
By Theorem
\ref{real},  both $\lambda_{n-1}$ and $\lambda_n$ are real. Hence,
$\lambda_{n-1}<0$ and $\lambda_{n}<0$ if and only if $\lambda_{n}+\lambda_{n-1}<0$
and $%
\lambda_{n-1}\lambda_{n}>0$.  From Definition \ref{lastable}, the conclusion
follows immediately.
\end{enumerate}
\item[(3)] Case: ${\mathbf r}$ is non-diagonal such that $p$ appears $i$
times and $q$ appears $n-i$ times where $2\leq i\leq \lfloor\frac{n}{2}\rfloor$.
From Theorem \ref{ndchar}, the eigenvalues of $J_{f}\left({\mathbf r}\right)$
are
\[
\lambda_{1}=\cdots=\lambda_{n-i-1}=G_{1},\]
\[
\lambda_{n-i}=\cdots=\lambda_{n-2}=G_{2}\]
and
\begin{center}
$\lambda_{n-1}$ and $\lambda_n$
are the two solutions of $\lambda^2-G_3\lambda+G_4=0$.
\end{center} Note
\[\lambda_{n-1}+\lambda_n=G_3,\]
\[\lambda_{n-1}\lambda_{n}=G_4.\]
By Theorem
\ref{real},  both $\lambda_{n-1}$ and $\lambda_n$ are real. Hence, $\lambda_n<0$
and $\lambda_{n-1}<0$ if and only if $\lambda_{n}+\lambda_{n-1}<0$ and $\lambda_{n-1}\lambda_{n}>0$.
From Definition \ref{lastable}, the conclusion follows immediately.
\end{enumerate}
\end{proof}

\section{Special Algorithm}\label{algorithm}

In this section, we present  algorithms for the problem posed in Section~\ref{problem},
that exploits  several special  structures   proved in Section~\ref{structure}.
 The description of the main algorithm is given in Algorithm~\ref{sa}.  
 %It is self-contained in that it can be read without referring to any other
%part of this paper, except Definition~\ref{cd}.
It is high-level in that it does not specify implemental details. Below we
will explain the main ideas underlying  the sub-algorithms and the main algorithm.

%\bigskip
\begin{itemize}
\item {\bf Algorithm \ref{NDE} ({\tt NonDiagonalEquilibrium})}: The correctness
of the algorithm follows from the symmetry of $\dot{\mathbf x}={\mathbf f}$
and Theorem \ref{ndchar}.
\item {\bf Algorithm \ref{DE} ({\tt DiagonalEquilibrium})}: The correctness
of the algorithm follows from the symmetry of $\dot{\mathbf x}={\mathbf f}$
and Theorem \ref{dchar}.
\item {\bf Algorithm \ref{EC} ({\tt EquilibriumCounting})}: Given ${\mathbf
f}$ satisfying the conditions in Definition~\ref{cd},
and a real number $v$, we compute $E_v\;(S_v)$, the number of (stable) equilibriums
of $\dot{\mathbf x}={\mathbf f}\left(v,{
\mathbf x}\right)$.
To this purpose,  we transform the  $n$--dimensional system $\dot{\mathbf
x}={\mathbf f}$ into several $2$--dimensional
systems by Algorithms \ref{DE} and \ref{NDE}, determine the stability easily
by Corollary
\ref{stable} and count the number of (stable) equilibriums by symmetry. 
See more details below.
\begin{itemize}
\item{\bf Lines \ref{ECDE}--\ref{nsde}}: We count the number of diagonal
equilibriums and determine the stability of the diagonal equilibriums by
Corollary \ref{stable}-(1).
\item{\bf Lines \ref{ECNDE}--\ref{nscondition2}}: We are preparing to  count
the number of    non-diagonal equilibriums.
 If $i=1$ and $n=2$,  we determine the stability of a non-diagonal equilibrium
by Corollary
\ref{stable}-(2a). If $i=1$ and $n>2$,  we determine the stability of a non-diagonal
equilibrium
by Corollary
\ref{stable}-(2b). 
 If $i\neq1$,  we determine the stability
of a non-diagonal equilibrium by Corollary
\ref{stable}-(3).

\item{\bf Lines \ref{checksymmetry}--\ref{compute2}}: We  compute the number
of (stable) equilibriums by combining the results computed by {\bf Lines
\ref{ECNDE}}--{\bf \ref{nscondition2}} together. In fact, by the symmetry
of  $\dot{\mathbf x}={\mathbf f}$, for every $i$ $(i=1,\ldots,\lfloor \frac{n}{2}\rfloor)$,
 if the system $\sigma=v\wedge F_1=0\wedge F_2=0\wedge
p\neq q$ has $\tilde{e}_i$  positive solutions, then
\begin{enumerate}
\item[(a)]if $i=\frac{n}{2}$,
the system $\sigma=v\wedge F_1=0\wedge F_2=0\wedge p\neq q$ is symmetric
and thus $\tilde{e}_i$
is even and the system $\dot{\mathbf x}={\mathbf f}$ has $\frac{\tilde{e}_i}{2}\cdot
\binom{n}{i}$ non-diagonal equilibriums.
 \item[(b)]if  $i\neq \frac{n}{2}$, the system $\dot{\mathbf x}={\mathbf
f}$ has $\tilde{e}\cdot \binom{n}{i}$ non-diagonal equilibriums.
\end{enumerate}
Similarly, we count the number of stable equilibriums.
\end{itemize}
\item {\bf Algorithm \ref{CP} ({\tt CriticalPolynomial})}: Given ${\mathbf
f}$ satisfying the conditions in Definition~\ref{cd}, we compute a polynomial
$B\left(\sigma\right)$ such that every ``critical" $\sigma$ value of the
system $\dot{\mathbf x}={\mathbf f}$ is a root of  $B\left(\sigma\right)=0$.
By  the ``critical" values, we mean that the number of the (stable) equilibriums
 of the system changes only when $\sigma$ passes through those values. Note
that the number of the (stable) equilibriums changes only when an eigenvalue
of the Jacobian  vanishes. In diagonal case,  by Algorithm \ref{DE}, an eigenvalue
vanishes if and only if $G_1G_2=0$, see {\bf Lines
\ref{DECP1}--\ref{DECP2}}.  In non-diagonal case, by Algorithm~\ref{NDE},
if an eigenvalue vanishes then $G_1G_2G_3G_4=0$, see {\bf Lines
\ref{NDECP1}--\ref{NDECP2}}.
\item {\bf Algorithm \ref{sa} ({\tt EquilibriumClassification (Special algorithm
for MSRS)})}:

\begin{itemize}
\item {\bf Lines \ref{cp}--\ref{sample}}: By Algorithm \ref{CP}, we compute
$B\left(\sigma\right)$ and isolate the real roots of
$B\left(\sigma\right)=0$.   Note that for all
$\sigma$ in each open interval determined by $B\left(\sigma\right)\neq
0$, the number of (stable) equilibriums is uniform. Thus  we sample  one
rational number $v_i$ from each open interval.
\item {\bf Lines \ref{loop2}--\ref{endloop}}: In this loop,  we compute $e_j\
(s_j)$, the number of (stable) equilibriums
for $\sigma=v_j$ by Algorithm \ref{EC}. We also collect all root isolation
intervals  containing
the ``critical" $\sigma$ values.
Recall that  a  root of $B$ may not be critical, although $B$ vanishes at
every critical $\sigma$ value.
So we  check whether a root
of $B\left(\sigma\right)=0$ is critical or not by {\bf Lines \ref{begincheck}--\ref{endcheck}}.
\end{itemize}
\end{itemize}

%--------------------------------------
\begin{algorithm}\label{sa}
\caption{{\tt EquilibriumClassification (Special algorithm for MSRS)}}
\KwIn{
\begin{enumerate}
\item[]${\mathbf f}=\left(f_1,\ldots,f_n\right)\in \left({\mathbb Q}\left(\sigma,{\mathbf
x}\right)\right)^n$ such
that $\dot{\mathbf x}={\mathbf f}$ is a MSRS
\end{enumerate}
}
\KwOut{
\begin{enumerate}
\item[] $B\in \mathbb{Z}[\sigma]$,\\
$I_1,\ldots,I_{w-1} \in \mathbb{IQ}_{>0}$, (that is, closed intervals
with positive rational endpoints) and \\
$\left(e_1,s_1\right), \ldots, \left(e_w,s_w\right)\in  \mathbb{Z}^2_{\geq
0} $ \\
such that
\begin{enumerate}
\item[] $\forall j \in \{1,\ldots,w-1\}, \;B$ has one and only one real root,
say $\sigma_j$, in $I_j$,
\item[] $\sigma_1 < \cdots <\sigma_{w-1}$, and
\item[] $\forall j \in \{1,\ldots,w\}\;\;\forall v\in\left(\sigma_{j-1},\sigma_j\right)
\;\; E_{v}=e_j\; \wedge S_{v}=s_j$
\end{enumerate}
where
\begin{enumerate}
\item[]$\sigma_0=0$, $\sigma_w=\infty$,
\item[]$E_{v}$ ($S_{v}$) denotes the number of (stable) equilibriums of 
$\dot{\mathbf x}={\mathbf f}\left(v,{\mathbf x}\right)$.
\end{enumerate}
\end{enumerate}
}
$B\leftarrow \text{\tt CriticalPolynomial}\left({\mathbf f}\right)$\;  {\nllabel{cp}}
 $I_1,\ldots,I_m\leftarrow$real root isolation of $B\left(\sigma\right)=0\wedge
\sigma>0$\;{\nllabel{isolate}}
 $v_1,\ldots,v_{m+1}\leftarrow$ rational points in each open interval of
$B\left(\sigma\right)\neq 0 \wedge \sigma>0$\;{\nllabel{sample}}
$Intervals\leftarrow$ empty list, $Numbers\leftarrow$ empty list\;{\nllabel{interval}}
\For{$j$  {\bf from} $1$ {\bf to} $m+1$ {\nllabel{loop2}}}
{
  $\left(e_j,s_j\right)\leftarrow \text{\tt EquilibriumCounting}
     \left({\mathbf f},v_j\right)$\;{\nllabel{ECv}}
  \If{$j>1${\nllabel{begincheck}}}
  {
   \If{$e_{j}=e_{j-1}$ {\bf and} $s_{j}=s_{j-1}$
          {\nllabel{checkcondition}}}
    {
      $e\leftarrow$number of the equilibriums  when $B\left(\sigma\right)=0$
          and $\sigma\in I_{j}$\;{\nllabel{check1}}
       $s\leftarrow$number of the stable equilibriums  when
        $B\left(\sigma\right)=0$ and $\sigma\in I_{j}$\;{\nllabel{check2}}
     \If{$e=e_{j}$ {\bf and} $s=s_{j}$}
            {{\bf next}\;}
     }
   $Intervals\leftarrow$ Append $I_{j-1}$ to  $Intervals$\; {\nllabel{endcheck}}
 }
   $Numbers\leftarrow$ Append $\left(e_{j},s_{j}\right)$ to $Numbers$\;
{\nllabel{endloop}}
}
\Return $B, Intervals,Numbers$\;
\end{algorithm}

%-------------------------------
\begin{algorithm}[hb]\label{CP}
\caption{\tt CriticalPolynomial}
\KwIn{
\begin{enumerate}
\item[] ${\mathbf f}=\left(f_1,\ldots,f_n\right)\in \left({\mathbb Q}\left(\sigma,{\mathbf
x}\right)\right)^n$ such
that $\dot{\mathbf x}={\mathbf f}$ is a MSRS

\end{enumerate}}
\KwOut{
\begin{enumerate}
\item[] $B\in \mathbb{Z}[\sigma]$ such that if $v$ is critical for $MSRS\left(l,g,h,p,\sigma\right)$,
then $B\left(v\right)=0$
\end{enumerate}}
$F,G_1,G_2\leftarrow \text{\tt DiagonalEquilibrium}\left({\mathbf f}\right)$\;
\nllabel{DECP1}
Compute $B_0$ such that
 $\big[\exists q(F=0\wedge
 G_1G_2=0)\big]\Rightarrow B_0\left(\sigma\right)=0$\; \nllabel{DECP2}
 \For{$i$ {\bf from} $1$ {\bf to} $\lfloor\frac{n}{2}\rfloor$\nllabel{CPloop}}
{$F_1,F_2,G_1,G_2,G_3,G_4\leftarrow \text{\tt NonDiagonalEquilibrium}
\left({\mathbf f},i\right)$\;\nllabel{NDECP1}
Compute $B_i$ such that $\big[\exists p,q(F_1=0\wedge
F_2=0\wedge G_1G_2G_3G_4=0)\big]\Rightarrow
B_i\left(\sigma\right)=0$\;{\nllabel{NDECP2}}}
$B\leftarrow \prod_{i=0}^{\lfloor\frac{n}{2}\rfloor}B_i$\;{\nllabel{CPF}}
{\bf return} $B$\;
\end{algorithm}
%-------------------------------
\begin{algorithm}\label{EC}
\caption{\tt EquilibriumCounting}
\KwIn{
\begin{enumerate}
\item[] ${\mathbf f}=\left(f_1,\ldots,f_n\right)\in \left({\mathbb Q}\left(\sigma,{\mathbf
x}\right)\right)^n$ such
that $\dot{\mathbf x}={\mathbf f}$ is a MSRS
\item[] $v$, a positive real number
\end{enumerate}}
\KwOut{
\begin{enumerate}
\item[]$\left(e, s\right)$ such that $E_v=e\wedge S_v=s$,
where
$E_{v}$ ($S_{v}$) denotes the number of (stable) equilibrium
of  $\dot{\mathbf x}={\mathbf f}\left(v,{\mathbf x}\right)$.
%\[\frac{dx_{k}}{dt}=-l\left(x_k\right)+\sigma \frac{g\left(x_k\right)}{P(x_1,\ldots,x_n)+h\left(x_k\right)},\;k=1,\ldots,n\]
%for $\sigma=v$.

\end{enumerate}}

$F,G_1,G_2\leftarrow
\text{\tt DiagonalEquilibrium}\left({\mathbf f}\right)$\;{\nllabel{ECDE}}
$e\leftarrow$number of   positive roots of $\sigma=v\wedge F=0$\;
   {\nllabel{nde}}
  $s\leftarrow$ number of  positive roots of
   $\sigma=v\wedge F=0\wedge G_1<0 \wedge G_2<0$\; {\nllabel{nsde}}
\For{$i$ {\bf from} $1$ {\bf to} $\lfloor\frac{n}{2}\rfloor$ {\nllabel{subloop}}}
  {$F_1,F_2,G_1,
  G_2,G_3,G_4\leftarrow
\text{\tt NonDiagonalEquilibrium}\left({\mathbf f},i\right)$\;
{\nllabel{ECNDE}}
   $\tilde{e}\leftarrow$ number of   positive solutions of $\sigma=v\wedge
F_1=0\wedge
F_2=0\wedge p\neq q$\;{\nllabel{lnnd}}
  \eIf{$i= 1${\nllabel{ns1}}}
   {\eIf{$n=2$}{$\tilde{s}\leftarrow$number of  positive solutions of
   $\sigma=v\wedge F_1=0\wedge F_2=0\wedge p\neq q
\wedge G_3<0\wedge G_4>0${\nllabel{nscondition0}}}{$\tilde{s}\leftarrow$number
of  positive solutions of
   $\sigma=v\wedge F_1=0\wedge F_2=0\wedge p\neq q\wedge G_1<0
\wedge G_3<0\wedge G_4>0$ \;{\nllabel{nscondition1}}}}
{ $\tilde{s}\leftarrow$number of   positive solutions of
 $\sigma=v\wedge F_1=0\wedge F_2=0\wedge p\neq q\wedge G_1<0 \wedge G_2<0\wedge
G_3<0\wedge G_4>0$\;
{\nllabel{nscondition2}}}

 \eIf{$i=\frac{n}{2}$ \nllabel{checksymmetry}}
 {$e\leftarrow e+\frac{\tilde{e}}{2}\cdot \binom{n}{i}$,
  $s\leftarrow s+\frac{\tilde{s}}{2}\cdot \binom{n}{i}$\;}
 {$e\leftarrow e+\tilde{e}\cdot \binom{n}{i}$,
 $s\leftarrow s+\tilde{s}\cdot \binom{n}{i}$\;
 {\nllabel{compute2}}
  }
}
{\bf return} $\left(e,s\right)$\;
\end{algorithm}
\begin{algorithm}\label{DE}
\caption{\tt DiagonalEquilibrium}
\KwIn{
\begin{enumerate}
\item[] ${\mathbf f}=\left(f_1,\ldots,f_n\right)\in \left({\mathbb Q}\left(\sigma,{\mathbf
x}\right)\right)^n$ such
that $\dot{\mathbf x}={\mathbf f}$ is a MSRS
\end{enumerate}}
\KwOut{
\begin{enumerate}
\item[] $F,G_1,G_2 \in \mathbb{Q}\left(\sigma,q\right)$ such that for every
$\sigma\in {\mathbb R}_{>0}$,
\begin{enumerate}
\item[(1)] ${\mathbf r}=\left(q,\ldots,q\right)$ is an equilibrium  if and
only
if $F=0$

\item[(2)] if ${\mathbf r}=\left(q,\ldots,q\right)$ is an equilibrium then
the
eigenvalues
of $J_{\mathbf f}\left({\mathbf r}\right)$ are
\[\lambda _{1}=\cdots =\lambda _{n-1}=G_1,\lambda _{n}=G_2 \]
\end{enumerate}
\end{enumerate}}

Let $l,g,h,P$ be the functions such that $f_{k}=-l\left(x_k\right)+\sigma
\frac{g\left(x_k\right)}{P\left(x_1,\ldots,x_n\right)+h\left(x_k\right)}$\;
$D_n\leftarrow-\frac{P\left(x_1,\ldots,x_n\right)+h\left(x_n\right)}{l\left(x_n\right)}$\;
$
\tau   \leftarrow \frac{\partial f_n}{\partial x_n},\;\;\;\;
\xi    \leftarrow \frac{\frac{\partial P}{\partial x_{n-1}}}{D_n}
,\;\;\;\;
F\leftarrow f_1$\;
$G_1\leftarrow \tau-\xi,\;\;\;\;
G_2\leftarrow {\tau}+\left(n-1\right)\xi$\;
Replace $x_1,\ldots,x_n$ with $q$ in $F,G_1,G_2$\;
{\bf return} $F,G_1,G_2$\;
\end{algorithm}

\begin{algorithm}\label{NDE}
\caption{\tt NonDiagonalEquilibrium}
\KwIn{
\begin{enumerate}
\item[] ${\mathbf f}=\left(f_1,\ldots,f_n\right)\in \left({\mathbb Q}\left(\sigma,{\mathbf
x}\right)\right)^n$ such that $\dot{\mathbf x}={\mathbf f}$ is a MSRS 
\item[] $i$, an positive integer such that $1\leq i\leq \lfloor\frac{n}{2}\rfloor$
\end{enumerate}}
\KwOut{
\begin{enumerate}
\item[] $F_1, F_2, G_1, G_2, G_3, G_4 \in \mathbb{Q}\left(\sigma,p,q\right)$
such that for every
$\sigma\in {\mathbb R}_{>0}$,
\begin{enumerate}
\item[(1)] ${\mathbf r}=\left(p,\ldots,p,q,\ldots,q\right)$ is an equilibrium
and $p$ appears
$i$ times
if and only if  $F_1=0\wedge F_2=0$
\item[(2)] If ${\mathbf r}=\left(p,\ldots,p,q,\ldots,q\right)$ is an equilibrium
and $p$ appears
$i$ times then
the  eigenvalues
of $J_{\mathbf f}\left({\mathbf r}\right)$ are
as follows.\begin{enumerate}
\item[(a)] if $i=1$, then $$\lambda_1=\cdots=\lambda_{n-2}=G_1, \; \lambda_{n-1}+\lambda_n=G_3,
\; \lambda_{n-1}\lambda_n=G_4$$
\item[(b)]  if $i>1$, then $$\begin{array}{l}\lambda_1=\cdots=\lambda_{n-i-1}=G_1,
\;
\lambda_{n-i}=\cdots=\lambda_{n-2}=G_{2}, \\
 \lambda_{n-1}+\lambda_n=G_3, \;
\lambda_{n-1}\lambda_n=G_4
\end{array}$$
\end{enumerate}
\end{enumerate}
\end{enumerate}}

Let $l,g,h,P$ be the functions such that  $f_{k}=-l\left(x_k\right)+\sigma
\frac{g\left(x_k\right)}{P\left(x_1,\ldots,x_n\right)+h\left(x_k\right)}$\;
$D_k\leftarrow\frac{P\left(x_1,\ldots,x_n\right)+h\left(x_k\right)}{l\left(x_k\right)}$
for $k=1,n$\;
$
\beta  \leftarrow \frac{\partial f_1}{\partial x_1},\;\;\;\;
\tau   \leftarrow \frac{\partial f_n}{\partial x_n},\;\;\;\;
\gamma \leftarrow \frac{\frac{\partial P}{\partial x_2}}{D_1},\;\;\;\;
\xi    \leftarrow \frac{\frac{\partial P}{\partial x_{n-1}}}{D_n},\;\;\;\;
\mu    \leftarrow \frac{\frac{\partial P}{\partial x_n}}{D_1},\;\;\;\;
\nu    \leftarrow \frac{\frac{\partial P}{\partial x_1}}{D_n}$\;

$F1\leftarrow f_1,\;\;\;\;F2\leftarrow f_n,\;\;\;\;G_1\leftarrow \tau-\xi,\;\;\;\;
G_2\leftarrow \beta-\gamma$\;
$G_3\leftarrow \beta+\tau+\left(i-1\right)\gamma+\left(n-i-1\right)\xi$\;
$G_4\leftarrow \left(\beta+\left(i-1\right)\gamma\right)\left(\tau+\left(n-i-1\right)\xi\right)-i\left(n-i\right)\mu\nu$\;
Replace $x_1,\ldots,x_i$ with $p$ and $x_{i+1},\ldots,x_n$ with $q$ in $F_1,F_2,G_1,G_2,G_3,G_4$\;

{\bf return} $F_1,F_2,G_1,G_2,G_3,G_4$\;

\end{algorithm}

\begin{example}\label{example}
We will illustrate the algorithm on
Example \ref{result}.

%??? Make sure that the trace refer to the line number, not Step 1.1, etc.

\item[]In {\bf Algorithm \ref{sa}. Line \ref{cp}},   we compute $B\left(\sigma\right)$
by Algorithm \ref{CP}.
\begin{enumerate}
\item[]In {\bf Algorithm \ref{CP}. Line \ref{DECP1}}, we call $\text{\tt
DiagonalEquilibrium}\left(f_1,f_2,f_3,f_4\right)$, where
\[f_k=-x_k+\frac{\sigma}
{1+x_1^4+x_2^4+x_3^4+x_4^4-x_k^4},\;\;k=1,\ldots,4,\]
and get
\begin{align*}
\begin{cases}
F\left(\sigma,  q\right)=- q + \frac{\sigma}{1+3q^4} 
\\
G_1\left(\sigma,q\right)=-1+\frac{4q^4}{1+3q^4} 
\\
G_2\left(\sigma,q\right)=-1-\frac{12q^4}{1+3q^4 }.
\end{cases}
\end{align*}
\item[]In {\bf Algorithm \ref{CP}. Line \ref{DECP2}}, we compute the projection
of $F=0\wedge G_1G_2=0$ on $\sigma$ axe and obtain $B_0\left(\sigma\right)=\sigma-4.$

\item[]In {\bf Algorithm \ref{CP}. Line \ref{CPloop}}, we start loop.  Note
that $\lfloor \frac{n}{2}\rfloor=2$, so $i=1,2$.
\begin{enumerate}
\item[]For $i=1$, in {\bf Algorithm \ref{CP}. Line \ref{NDECP1}}, we call
\[\text{\tt
NonDiagonalEquilibrium}\left(f_{1},f_2,f_3,f_4,1\right)\] and get
  \begin{align*}
\begin{cases}
F_1\left(\sigma,p,q\right)=- p + \frac{\sigma}{1  + 3q^{4}}  \\
F_{2}\left(\sigma,p,q\right)=- q + \frac{ \sigma}{1 + p^{4} + 2q^{4}} \\
G_{1}\left(\sigma,p,q\right)=-1 + \frac{4q^4}{1 + p^{4} + 2q^{4} }\\
G_{2}\left(\sigma,p,q\right)=-1 + \frac{4q^3p}{1 + 3q^4 }\\
G_{3}\left(\sigma,p,q\right)=-2 - \frac{8q^4}{1 + p^{4} + 2q^{4} }\\
G_{4}\left(\sigma,p,q\right)=1+\frac{8q^4}{1 + p^{4} + 2q^{4} }-\frac{48q^4p^4}{(1
+ 3q^4)(1 + p^{4} + 2q^{4})}
\\
\end{cases}.
\end{align*}
Then in {\bf Algorithm \ref{CP}. Line \ref{NDECP2}}, we compute the projection
of $F_1=0\wedge F_2=0\wedge G_1G_2G_3G_4=0$ on $\sigma$ axe and obtain
\[B_1=(\sigma-4)
(42755090541778564453125\sigma^{24}+\cdots-140737488355328).\]

\item[]For
$i=2$, in {\bf Algorithm \ref{CP}. Line \ref{NDECP1}}, we call
\[\text{\tt
NonDiagonalEquilibrium}\left(f_1,f_2,f_3,f_4,2\right)\] and get
\begin{align*}
\begin{cases}
F_1\left(\sigma,p,q\right)=- p + \frac{\sigma}{1  + p^4+2q^{4}} \\
F_{2}\left(\sigma,p,q\right)=- q + \frac{\sigma}{1 + 2p^{4} + q^{4}}  \\
G_{1}(\sigma,p,p)=-1 + \frac{4q^4}{1 + 2p^{4} + q^{4} } \\
G_{2}\left(\sigma,p,q\right)=-1 + \frac{4p^4}{1  + p^4+2q^{4}}\\
G_{3}\left(\sigma,p,q\right)=-2 - \frac{4p^4}{1  + p^4+2q^{4}}  - \frac{4q^4}{1
+ 2p^{4} + q^{4} }\\
G_4\left(\sigma,p,q\right)= \left(-1-\frac{4p^4}{1  + p^4+2q^{4}}\right)\left(-1-\frac{4q^4}{1
+ 2p^{4} + q^{4} }\right)-\frac{64q^4p^4}{\left(1  + p^4+2q^{4}\right)\left(1
+ 2p^{4} + q^{4} \right)}
\end{cases}.
\end{align*}
Then in {\bf Algorithm \ref{CP}. Line \ref{NDECP2}}, we compute the projection
of $F_1=0\wedge F_2=0\wedge G_1G_2G_3G_4=0$ on $\sigma$ axe and obtain
\[B_2=\sigma-4.\]
\end{enumerate}
\item[]In  {\bf Algorithm \ref{CP}. Line \ref{CPF}}, let $B=B_0B_1B_2$.
\end{enumerate}
\item[]In {\bf Algorithm \ref{sa}. Line \ref{isolate}}, we isolate the  positive
roots
of $B\left(\sigma\right)=0$, obtaining
\[I_1=[\frac{5}{4}, \frac{21}{16}], I_2=[4, 4].\]
\item[]In {\bf Algorithm \ref{sa}. Line \ref{sample}}, sample rational points
from $\left(0, \frac{5}{4}\right)$, $\left(\frac{21}{16},4\right)$, and $\left(4,\infty\right)$,
obtaining
\[v_1=1, v_2=2, v_3=5.\]
\item[]In {\bf Algorithm \ref{sa}. Line \ref{loop2}}, we start the  loop
and compute the number of (stable) equilibriums for very sample point.
\begin{enumerate}
\item[]
For $j=1$, in {\bf Algorithm \ref{sa}. Line \ref{ECv}}, call \text{\tt
EquilibriumCounting}$(f_1,f_2,f_3,f_4,1)$.

\begin{enumerate}
\item[]
In {\bf Algorithm \ref{EC}. Lines \ref{ECDE}--\ref{nsde}}, compute the number
of (stable) diagonal equilibriums and
initialize $e_1=1$ ($s_1=1$).
\item[]In  {\bf Algorithm \ref{EC}. Line \ref{subloop}}, we enter the loop.
\begin{enumerate}
\item[] For $i=1$,  in {\bf Algorithm \ref{EC}. Lines \ref{ECNDE}--\ref{lnnd}},
 compute the number of positive solutions of
\[\sigma=1\wedge F_1=0\wedge
F_2=0\wedge p\neq q,\]
obtaining $0$.

\item[]For $i=2$, in {\bf Algorithm \ref{EC}.  Lines \ref{ECNDE}--Line \ref{lnnd}},
 compute the number of positive solutions of
\[\sigma=1\wedge F_1=0\wedge
F_2=0\wedge p\neq q,\] obtaining $0$.
\end{enumerate}

\end{enumerate}
So when $\sigma=1$,  there
is only $1$ equilibrium, that is the diagonal one,  and it is stable.
\item[]Note we do not pass through {\bf Algorithm \ref{sa}. Lines \ref{checkcondition}--\ref{endcheck}}.
\item[]In {\bf Algorithm \ref{sa}. Line \ref{endloop}}, let $Numbers=[\left(1,1\right)]$.

\item[]For $j=2$, call \text{\tt
EquilibriumCounting}$\left(f_1,f_2,f_3,f_4,2\right)$.

\begin{enumerate}
\item[]In {\bf Algorithm \ref{EC}. Lines \ref{ECDE}--\ref{nsde}}, compute
the number
of (stable) diagonal equilibriums and
initialize  $e_2=1$ ($s_2=1$).
\item[]In  {\bf Algorithm \ref{EC}. Line \ref{subloop}}, we enter the loop.
\begin{enumerate}

\item[]For $i=1$,  in {\bf Algorithm \ref{EC}. Lines \ref{ECNDE}--\ref{lnnd}},
 compute the number of positive solutions of
\[\sigma=2\wedge F_1=0\wedge
F_2=0\wedge p\neq q,\]obtaining $2$.  Then in{\bf \ Algorithm \ref{EC}. Lines
\ref{nscondition2}}, compute the number of  distinct
positive solutions of
\[\sigma=2\wedge F_1=0\wedge
F_2=0\wedge p\neq q\wedge G_1<0\wedge G_3<0\wedge G_4>0,\]
obtaining $1$.

\item[] For $i=2$, in {\bf Algorithm \ref{EC}. Lines \ref{ECNDE}--\ref{lnnd}},
 compute the number of positive solutions of
\[\sigma=2\wedge F_1=0\wedge
F_2=0\wedge p\neq q,\]  obtaining $0$.
\end{enumerate}
\item[]In {\bf Algorithm \ref{EC}. Lines \ref{checksymmetry}--\ref{compute2}},
let $e_2=1+2\cdot\binom{4}{1}=9$ and $s_2=1+\binom{4}{1}=5$.

\end{enumerate}
So when $\sigma=2$,  there are $9$ equilibriums and $5$ stable equilibriums.

\item[] Since $e_1\neq e_2$,
in {\bf Algorithm \ref{sa}. Lines \ref{checkcondition}}, {\bf \ref{endcheck}}
and {\bf \ref{endloop}},
let $Intervals=[I_1]$ and let $Numbers=[\left(1,1\right), \left(9,5\right)]$.

\item[]For $j=3$, call \text{\tt
EquilibriumCounting}$\left(f_1,f_2,f_3,f_4, 5\right)$.

\begin{enumerate}
\item[]In {\bf Algorithm \ref{EC}. Lines \ref{ECDE}--\ref{nsde}}, compute
the number
of (stable) diagonal equilibriums and
initialize  $e_3=1$ ($s_3=0$).
\item[]In  {\bf Algorithm \ref{EC}. Line \ref{subloop}}, we enter the loop.
\begin{enumerate}
\item[]For $i=1$, in {\bf Algorithm \ref{EC}. Lines \ref{ECNDE}--\ref{lnnd}},
 compute the number of positive solutions of
\[\sigma=5\wedge F_1=0\wedge
F_2=0\wedge p\neq q,\]obtaining $2$.   Then in {\bf Algorithm \ref{EC}. Lines
\ref{nscondition2}}, compute the number of  distinct
positive solutions of
\[\sigma=5\wedge F_1=0\wedge
F_2=0\wedge p\neq q\wedge G_1<0\wedge G_3<0\wedge G_4>0,\]
obtaining $1$.
\item[]For $i=2$, in {\bf Algorithm \ref{EC}. Lines \ref{ECNDE}--\ref{lnnd}},
 compute the number of positive solutions of
\[\sigma=5\wedge F_1=0\wedge
F_2=0\wedge p\neq q,\] obtaining $2$. Then in {\bf Algorithm \ref{EC}. Lines
\ref{nscondition1}}, compute the number of  distinct
positive solutions of
\[\sigma=5\wedge F_1=0\wedge
F_2=0\wedge p\neq q\wedge G_{1}<0\wedge G_2<0\wedge G_3<0\wedge G_4>0,\]
obtaining $0$.
\end{enumerate}
In {\bf Algorithm \ref{EC}. Lines \ref{checksymmetry}--\ref{compute2}}, let
$e_3=1+2\cdot\binom{4}{1}+\frac{2\cdot\binom{4}{2}}{2}=15$ and $s_3=\binom{4}{1}=4$.

\end{enumerate}
So when $\sigma=5$,  there are $15$ equilibriums and $4$ stable equilibriums.

\item[] Since $e_2\neq e_3$,
in {\bf Algorithm \ref{sa}.  Lines \ref{checkcondition}}, {\bf \ref{endcheck}}
and {\bf \ref{endloop}},
let $Intervals=[I_1,I_2]$ and let $Numbers=[\left(1,1\right),
\left(9,5\right),\left(15,4\right)]$.

\end{enumerate}
Finally, the main algorithm outputs  shown in Example \ref{result}.
\end{example}

\section{Performance}\label{performance}

\begin{figure}[tp]
\begin{center}
\tiny
\begin{tabular}{|c|c|c|c|c|c|c|c|c|c|c|c|c|c|c|c|}\hline
\backslashbox{$n$}{$c$}  &  $  1$  &  $  2$  &  $  3$  &  $  4$  &  $  5$  &  $  6$  &  $  7$  &  $  8$  &  $  9$  &  $ 10$  &  $ 11$  &  $ 12$  &  $ 13$  &  $ 14$  &  $ 15$\\\hline
{\multirow{2}{*}{$  2$}}  &  $       0.0$  &  $       0.0$  &  $       0.1$  &  $       0.1$  &  $       0.1$  &  $       0.2$  &  $       0.2$  &  $       0.4$  &  $       0.7$  &  $       1.4$  &  $       2.2$  &  $       3.6$  &  $       5.8$  &  $       9.1$  &  $      13.8$\\
  &  $       0.0$  &  $       0.0$  &  $       0.0$  &  $       0.0$  &  $       0.1$  &  $       0.1$  &  $       0.1$  &  $       0.2$  &  $       0.3$  &  $       0.7$  &  $       1.1$  &  $       1.6$  &  $       2.6$  &  $       4.2$  &  $       5.9$\\\hline
{\multirow{2}{*}{$  3$}}  &  $       0.0$  &  $       0.0$  &  $       0.1$  &  $       0.1$  &  $       0.2$  &  $       0.3$  &  $       0.5$  &  $       0.9$  &  $       1.9$  &  $       3.2$  &  $       6.3$  &  $      10.4$  &  $      19.4$  &  $      29.3$  &  $      53.9$\\
  &  $       0.1$  &  $       1.7$  &  $      96.9$  &  $    \infty$  &  $    \infty$  &  $    \infty$  &  $    \infty$  &  $    \infty$  &  $    \infty$  &  $    \infty$  &  $    \infty$  &  $    \infty$  &  $    \infty$  &  $    \infty$  &  $    \infty$\\\hline
{\multirow{2}{*}{$  4$}}  &  $       0.0$  &  $       0.1$  &  $       0.1$  &  $       0.2$  &  $       0.3$  &  $       0.7$  &  $       1.3$  &  $       2.4$  &  $       4.7$  &  $       9.1$  &  $      16.7$  &  $      28.6$  &  $      51.4$  &  $      85.5$  &  $     129.7$\\
  &  $       0.1$  &  $       3.1$  &  $    \infty$  &  $    \infty$  &  $    \infty$  &  $    \infty$  &  $    \infty$  &  $    \infty$  &  $    \infty$  &  $    \infty$  &  $    \infty$  &  $    \infty$  &  $    \infty$  &  $    \infty$  &  $    \infty$\\\hline
{\multirow{2}{*}{$  5$}}  &  $       0.0$  &  $       0.1$  &  $       0.1$  &  $       0.2$  &  $       0.4$  &  $       0.7$  &  $       1.6$  &  $       2.9$  &  $       6.2$  &  $      11.5$  &  $      22.5$  &  $      36.7$  &  $      67.8$  &  $     110.6$  &  $     192.4$\\
  &  $       0.2$  &  $       0.1$  &  $    \infty$  &  $    \infty$  &  $    \infty$  &  $    \infty$  &  $    \infty$  &  $    \infty$  &  $    \infty$  &  $    \infty$  &  $    \infty$  &  $    \infty$  &  $    \infty$  &  $    \infty$  &  $    \infty$\\\hline
{\multirow{2}{*}{$  6$}}  &  $       0.1$  &  $       0.1$  &  $       0.1$  &  $       0.2$  &  $       0.6$  &  $       1.3$  &  $       2.6$  &  $       5.0$  &  $      10.1$  &  $      18.8$  &  $      36.2$  &  $      65.6$  &  $     111.9$  &  $     192.1$  &  $     289.2$\\
  &  $       0.3$  &  $      16.7$  &  $    \infty$  &  $    \infty$  &  $    \infty$  &  $    \infty$  &  $    \infty$  &  $    \infty$  &  $    \infty$  &  $    \infty$  &  $    \infty$  &  $    \infty$  &  $    \infty$  &  $    \infty$  &  $    \infty$\\\hline
{\multirow{2}{*}{$  7$}}  &  $       0.1$  &  $       0.1$  &  $       0.1$  &  $       0.3$  &  $       0.6$  &  $       1.3$  &  $       3.1$  &  $       5.7$  &  $      11.6$  &  $      22.0$  &  $      42.4$  &  $      70.9$  &  $     134.6$  &  $     220.4$  &  $     354.3$\\
  &  $       0.1$  &  $     177.7$  &  $    \infty$  &  $    \infty$  &  $    \infty$  &  $    \infty$  &  $    \infty$  &  $    \infty$  &  $    \infty$  &  $    \infty$  &  $    \infty$  &  $    \infty$  &  $    \infty$  &  $    \infty$  &  $    \infty$\\\hline
{\multirow{2}{*}{$  8$}}  &  $       0.1$  &  $       0.1$  &  $       0.2$  &  $       0.3$  &  $       0.7$  &  $       1.7$  &  $       3.7$  &  $       8.3$  &  $      16.7$  &  $      31.9$  &  $      59.7$  &  $     107.2$  &  $     185.1$  &  $     296.9$  &  $     510.4$\\
  &  $    \infty$  &  $    \infty$  &  $    \infty$  &  $    \infty$  &  $    \infty$  &  $    \infty$  &  $    \infty$  &  $    \infty$  &  $    \infty$  &  $    \infty$  &  $    \infty$  &  $    \infty$  &  $    \infty$  &  $    \infty$  &  $    \infty$\\\hline
{\multirow{2}{*}{$  9$}}  &  $       0.1$  &  $       0.1$  &  $       0.2$  &  $       0.3$  &  $       0.9$  &  $       1.8$  &  $       4.2$  &  $       8.2$  &  $      18.8$  &  $      34.8$  &  $      67.0$  &  $     114.8$  &  $     213.6$  &  $     340.5$  &  $     590.3$\\
  &  $    \infty$  &  $    \infty$  &  $    \infty$  &  $    \infty$  &  $    \infty$  &  $    \infty$  &  $    \infty$  &  $    \infty$  &  $    \infty$  &  $    \infty$  &  $    \infty$  &  $    \infty$  &  $    \infty$  &  $    \infty$  &  $    \infty$\\\hline
{\multirow{2}{*}{$ 10$}}  &  $       0.1$  &  $       0.2$  &  $       0.2$  &  $       0.3$  &  $       0.9$  &  $       0.9$  &  $       1.8$  &  $      11.0$  &  $      21.6$  &  $      47.6$  &  $      88.8$  &  $     149.4$  &  $     266.8$  &  $     453.5$  &  $     703.0$\\
  &  $    \infty$  &  $    \infty$  &  $    \infty$  &  $    \infty$  &  $    \infty$  &  $    \infty$  &  $    \infty$  &  $    \infty$  &  $    \infty$  &  $    \infty$  &  $    \infty$  &  $    \infty$  &  $    \infty$  &  $    \infty$  &  $    \infty$\\\hline
{\multirow{2}{*}{$ 11$}}  &  $       0.1$  &  $       0.2$  &  $       0.2$  &  $       0.4$  &  $       0.8$  &  $       2.1$  &  $       5.5$  &  $      10.8$  &  $      23.9$  &  $      43.9$  &  $      94.2$  &  $     161.8$  &  $     293.6$  &  $     482.8$  &  $     768.1$\\
  &  $    \infty$  &  $    \infty$  &  $    \infty$  &  $    \infty$  &  $    \infty$  &  $    \infty$  &  $    \infty$  &  $    \infty$  &  $    \infty$  &  $    \infty$  &  $    \infty$  &  $    \infty$  &  $    \infty$  &  $    \infty$  &  $    \infty$\\\hline
{\multirow{2}{*}{$ 12$}}  &  $       0.1$  &  $       0.2$  &  $       0.3$  &  $       0.4$  &  $       1.0$  &  $       2.3$  &  $       6.7$  &  $      13.6$  &  $      29.1$  &  $      58.2$  &  $     102.0$  &  $     204.7$  &  $     359.5$  &  $     604.9$  &  $    1029.0$\\
  &  $    \infty$  &  $    \infty$  &  $    \infty$  &  $    \infty$  &  $    \infty$  &  $    \infty$  &  $    \infty$  &  $    \infty$  &  $    \infty$  &  $    \infty$  &  $    \infty$  &  $    \infty$  &  $    \infty$  &  $    \infty$  &  $    \infty$\\\hline
{\multirow{2}{*}{$ 13$}}  &  $       0.1$  &  $       0.2$  &  $       0.4$  &  $       0.4$  &  $       1.0$  &  $       2.5$  &  $       6.7$  &  $      15.1$  &  $      33.5$  &  $      67.6$  &  $     133.6$  &  $     207.0$  &  $     414.7$  &  $     662.5$  &  $    1078.1$\\
  &  $    \infty$  &  $    \infty$  &  $    \infty$  &  $    \infty$  &  $    \infty$  &  $    \infty$  &  $    \infty$  &  $    \infty$  &  $    \infty$  &  $    \infty$  &  $    \infty$  &  $    \infty$  &  $    \infty$  &  $    \infty$  &  $    \infty$\\\hline
{\multirow{2}{*}{$ 14$}}  &  $       0.1$  &  $       0.2$  &  $       0.3$  &  $       0.6$  &  $       1.1$  &  $       2.6$  &  $       7.0$  &  $      15.9$  &  $      37.1$  &  $      74.8$  &  $     143.4$  &  $     259.6$  &  $     415.7$  &  $     812.0$  &  $    1319.1$\\
  &  $    \infty$  &  $    \infty$  &  $    \infty$  &  $    \infty$  &  $    \infty$  &  $    \infty$  &  $    \infty$  &  $    \infty$  &  $    \infty$  &  $    \infty$  &  $    \infty$  &  $    \infty$  &  $    \infty$  &  $    \infty$  &  $    \infty$\\\hline
{\multirow{2}{*}{$ 15$}}  &  $       0.1$  &  $       0.2$  &  $       0.3$  &  $       0.6$  &  $       1.1$  &  $       2.7$  &  $       7.0$  &  $      16.4$  &  $      39.4$  &  $      78.3$  &  $     151.1$  &  $     274.8$  &  $     501.3$  &  $     731.3$  &  $    1427.3$\\
  &  $    \infty$  &  $    \infty$  &  $    \infty$  &  $    \infty$  &  $    \infty$  &  $    \infty$  &  $    \infty$  &  $    \infty$  &  $    \infty$  &  $    \infty$  &  $    \infty$  &  $    \infty$  &  $    \infty$  &  $    \infty$  &  $    \infty$\\\hline
\end{tabular}

\end{center}
\caption{Timings  of the special algorithm (Algorithm~\ref{sa}) and the general
algorithm }\label{timings}
\bigskip
\begin{center}
\includegraphics[height=3in]{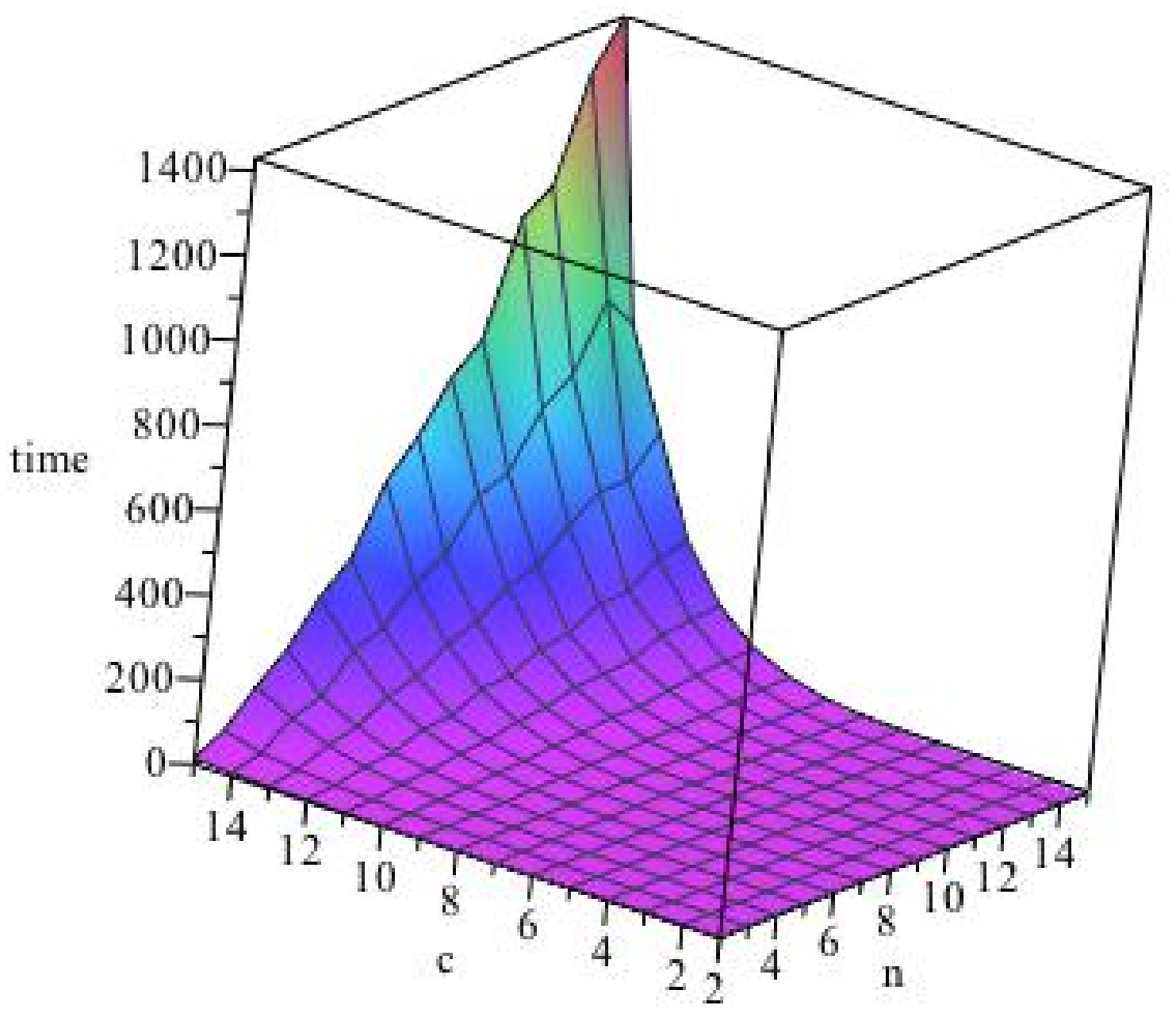}
\end{center}
\caption{$time$ -- $\left(n,c\right)$ of Algorithm \ref{sa}}\label{time}
\end{figure}

\begin{figure}[tp]
\subfigure[$n=3$]{
\begin{pspicture*}(0,0)(6.5,5.5)
\psset{xunit=0.7cm, yunit=0.25cm}
\psaxes{->}(0,0)(8,20)
\pscurve[linecolor=red]
(1.466,20)(1.500000000,18.05768355)(1.666666667,9.665910225)(2.,3.744480221)(2.166666667,2.932798584)
(2.333333333,2.437350705)(2.500000000,2.110124257)
(2.666666667,1.881180152)(3.,1.587270600)(3.333333333,1.410689094)
(3.500000000,1.347371056)(3.666666667,1.295297690)(4.,1.215318693)
(4.333333333,1.157430378)(4.500000000,1.134286871)(4.666666667,1.114123395)
(5.,1.080873798)(5.333333333,1.054809369)(5.500000000,1.043850652)
(5.666666667,1.034027547)(6.,1.017223224)(6.333333333,1.003474503)
(6.500000000,.9975298610)(6.666666667,.9921137624)(7.,.9826470661)
(8.,.9623001836)
\pscurve[linecolor=blue](2.22,20)(2.333333333,11.20930220)
(2.500000000,6.597539554)(2.666666667,4.656871062)(3.,3.)(3.333333333,2.293286887)
(3.500000000,2.078093084)(3.666666667,1.913896624)(4.,1.681792831)
(4.333333333,1.527311397)(4.500000000,1.468389912)(4.666666667,1.418270386)
(5.,1.337902603)(5.333333333,1.276674600)(5.500000000,1.251333820)
(5.666666667,1.228793555)(6.,1.190550789)(6.333333333,1.159469666)
(6.500000000,1.146058034)(6.666666667,1.133837830)(7.,1.112436367)
(8.,1.065785556)
\rput(8.5,0.5){$c$}
\rput(0.4,20){$\sigma$}
 \rput(1.5,20.5){$E_1$}
 \rput(2.3,20.5){$E_2$}
\rput(1.2,3.7){\tiny$\left(1,1\right)$}
\rput(2,10.5){\tiny$\left(7,4\right)$}
\rput(5.3,9.5){\tiny$\left(7,3\right)$}
 \end{pspicture*}}
\subfigure[$n=4$]{
\begin{pspicture*}(0,0)(6.5,5.5)
\psset{xunit=0.7cm, yunit=0.25cm}
\psaxes{->}(0,0)(8,20)
\pscurve[linecolor=red]
(1.556,20)(1.666666667,12.93605411)(2.,4.856380818)(2.333333333,2.929993587)
(2.500000000,2.475736091)(2.666666667,2.165598299)(3.,1.777446336)
(3.333333333,1.549955990)(3.500000000,1.469453549)(3.666666667,1.403666207)
(4.,1.303331342)(4.333333333,1.231195790)(4.500000000,1.202452016)
(4.666666667,1.177444366)(5.,1.136253889)(5.333333333,1.103972260)
(5.500000000,1.090389846)(5.666666667,1.078204201)(6.,1.057319880)(6.333333333,1.040175681)
(6.500000000,1.032739353)(6.666666667,1.025947964)
(7.,1.014030122)(8.,.9880894127)
\pscurve[linecolor=blue](3.256,20)(3.333333333,13.90389170)(3.500000000,8.533095579)
(3.666666667,6.143099906)(4.,4.)(4.333333333,3.041244558)(4.500000000,2.741510186)
(4.666666667,2.509683836)(5.,2.176376408)(5.333333333,1.949962661)(5.500000000,1.862387088)
(5.666666667,1.787261012)(6.,1.665366355)(6.333333333,1.571063718)
(6.500000000,1.531590050)(6.666666667,1.496223928)(7.,1.435586873)(8.,1.308424694)
\rput(8.5,0.5){$c$}
\rput(0.4,20){$\sigma$}
 \rput(1.6,20.5){$E_1$}
 \rput(3.5,20.5){$E_2$}
\rput(1.4,3.5){\tiny$\left(1,1\right)$}
\rput(2.5,11.5){\tiny$\left(9,5\right)$}
\rput(5.3,10.5){\tiny$\left(15,4\right)$}
 \end{pspicture*}}
\subfigure[$n=5$]{
\begin{pspicture*}(0,0)(6.5,5.5)
\psset{xunit=0.7cm, yunit=0.25cm}
\psaxes{->}(0,0)(8,20)
\pscurve[linecolor=red]
(1.62,20)(1.666666667,17.13618679)(2.,5.753282311)(2.333333333,3.297289806)(2.500000000,2.741237120)
(2.666666667,2.367818807)(3.,1.908420862)(3.333333333,1.643695348)
(3.500000000,1.550871559)(3.666666667,1.475354856)(4.,1.360768448)
(4.333333333,1.278810896)(4.500000000,1.246245046)(4.666666667,1.217950485)
(5.,1.171413064)(5.333333333,1.134983728)(5.500000000,1.119661922)(5.666666667,1.105915900)
(6.,1.082350640)(6.333333333,1.062989328)(6.500000000,1.054583235)(6.666666667,1.046900140)
(7.,1.033398914)(8.,1.003874850)
\pscurve[linecolor=blue](4.1,20)(4.333333333,9.291743603)
(4.500000000,6.896492427)(4.666666667,5.509178890)(5.,3.992231088)
(5.333333333,3.193133784)(5.500000000,2.922977914)(5.666666667,2.706231428)
(6.,2.381253453)(6.333333333,2.150356937)(6.500000000,2.058578178)(6.666666667,1.978654786)
(7.,1.846462685)(8.,1.587599363)
\pscurve[linecolor=black](4.285,20)(4.333333333,16.75129000)(4.500000000,10.49876136)
(4.666666667,7.635401540)(5.,5.)(5.333333333,3.789954345)(5.500000000,3.406079949)
(5.666666667,3.106913324)(6.,2.672696154)(6.333333333,2.374400435)(6.500000000,2.258144910)
(6.666666667,2.157966502)(7.,1.994419933)(8.,1.681792831)
\rput(8.5,0.5){$c$}
\rput(0.4,20){$\sigma$}
 \rput(1.6,20.5){$E_1$}
 \rput(3.8,20.5){$E_2$}
\rput(4.6,20.5){$E_3$}
\rput(1.5,3.5){\tiny$\left(1,1\right)$}
\rput(3.3,6.5){\tiny$\left(11,6\right)$}
\rput(4.4,12.5){\tiny$\left(31,6\right)$}
\rput(6.3,9.5){\tiny$\left(31,5\right)$}
 \end{pspicture*}}
\subfigure[$n=6$]{
\begin{pspicture*}(0,0)(6.5,5.5)
\psset{xunit=0.7cm, yunit=0.25cm}
\psaxes{->}(0,0)(8,20)
\pscurve[linecolor=red](1.68,20)(1.750000000,14.36635692)(1.800000000,11.81021514)
(2.,6.526920830)(2.200000000,4.385794998)(2.250000000,4.051022139)(2.333333333,3.597064004)
(2.500000000,2.953990159)(2.666666667,2.527503835)(2.750000000,2.365647355)(3.,2.009592415)
(3.250000000,1.775589683)(3.333333333,1.714969410)(3.500000000,1.612385445)(3.666666667,1.529216920)
(3.750000000,1.493367984)(4.,1.403521897)(4.250000000,1.333807621)(4.333333333,1.313987732)(4.500000000,1.278493529)
(4.666666667,1.247690458)(4.750000000,1.233788113)(5.,1.197093307)(5.250000000,1.166575107)(5.333333333,1.157534305)
(5.500000000,1.140905560)(5.666666667,1.125990062)(6.,1.100422364)(6.333333333,1.079412045)(6.500000000,1.070286788)
(6.666666667,1.061943601)(7.,1.047273175)(8.,1.015117908)
\pscurve[linecolor=blue]
(4.65,20)(4.666666667,18.83229414)(4.750000000,14.35980715)(5.,8.341413292)(5.250000000,5.916678570)(5.333333333,5.408308635)
(5.500000000,4.634124038)(5.666666667,4.074641371)(6.,3.324829590)(6.333333333,2.849027733)
(6.500000000,2.671705041)(6.666666667,2.522415520)(7.,2.285484381)(8.,1.855059988)

\pscurve[linecolor=black](5.34,20)(5.333333333,19.65985081)(5.500000000,12.47743774)
(5.666666667,9.130485136)(6.,6.)(6.333333333,4.539065405)(6.500000000,4.071281527)(6.666666667,3.704952794)
(7.,3.170032825)(8.,2.324494781)
\rput(8.5,0.5){$c$}
\rput(0.4,20){$\sigma$}
 \rput(1.6,20.5){$E_1$}
 \rput(4.6,20.5){$E_2$}
 \rput(5.6,20.5){$E_3$}
\rput(1.5,3.5){\tiny$\left(1,1\right)$}
\rput(4,6.5){\tiny$\left(13,7\right)$}
\rput(5.3,9.5){\tiny$\left(43,7\right)$}
\rput(6.6,8.5){\tiny$\left(63,6\right)$}

\end{pspicture*}}
\subfigure[$n=7$]{
\begin{pspicture*}(0,0)(7.7,5.5)
\psset{xunit=0.55cm, yunit=0.25cm}
\psaxes{->}(0,0)(11,20)
\pscurve[linecolor=red]
(1.685,20)(2.,7.217631630)(2.333333333,3.853521730)
(2.500000000,3.133456137)(2.666666667,2.660700943)(3.,2.092575309)
(3.333333333,1.772731253)(3.500000000,1.661997649)(3.666666667,1.572474418)
(4.,1.437615655)(4.333333333,1.341880063)(4.500000000,1.304001198)(4.666666667,1.271161322)
(5.,1.217280511)(5.333333333,1.175201862)(5.500000000,1.157524232)(5.666666667,1.141671934)
(6.,1.114504301)(6.333333333,1.092180588)(6.500000000,1.082483786)(6.666666667,1.073616703)
(7.,1.058020031)(8.,1.023789020)(9.,1.001672056)(10.,.9867348137)(11.,.9763238359)
\pscurve[linecolor=blue](5.1,20)(5.333333333,10.59524261)(5.500000000,7.983653347)
(5.666666667,6.431156556)(6.,4.691963218)(6.333333333,3.754076768)
(6.500000000,3.432844705)(6.666666667,3.173379180)(7.,2.781104818)
(8.,2.125209579)(9.,1.801466632)(10.,1.611199548)(11.,1.487160306)
\pscurve[linecolor=purple](6.23,20)(6.333333333,15.44277038)(6.500000000,11.07829450)(6.666666667,8.656808972)(7.,6.084720541)(7.333333333,4.755143336)
(7.5,4.308434612)
(7.666666667,3.950707522)(8.,3.414902375)(8.5,2.882482960)(9.,2.531338080)
(10.,2.099814608)(11.,1.846979686)
\pscurve[linecolor=black](6.39,20)(6.500000000,14.46290919)(6.666666667,10.62707361)(7.,7.)(7.333333333,5.288415739)(7.5,4.736866769)(7.666666667,4.303492379)(8.,3.668016173)(8.5,3.052547604)(9.,2.655264456)(10.,2.176376408)
(11.,1.900552860)
\rput(11.5,0.5){$c$}
\rput(0.4,20){$\sigma$}
 \rput(1.6,20.5){$E_1$}
 \rput(5.2,20.5){$E_2$}
 \rput(6.02,20.5){$E_3$}
 \rput(6.76,20.5){$E_4$}
\rput(1.5,1.5){\tiny$\left(1,1\right)$}
\rput(3.3,6.5){\tiny$\left(15,8\right)$}
\rput(6.7,4.5){\tiny$\left(57,8\right)$}
\rput(6.8,12.5){\tiny$\left(127,8\right)$}
\rput(8.7,5.5){\tiny$\left(127,7\right)$}
 \end{pspicture*}}
\subfigure[$n=8$]{
\begin{pspicture*}(0,0)(6.7,5.5)
\psset{xunit=0.55cm, yunit=0.25cm}
\psaxes{->}(0,0)(11,20)
\pscurve[linecolor=red](1.726,20)
(1.750000000,18.80588525)(2.,7.847547601)(2.250000000,4.646205318)
(2.333333333,4.079417777)(2.500000000,3.289741116)(2.666666667,2.775644710)
(2.750000000,2.582770669)(3.,2.163221300)(3.250000000,1.891333137)
(3.333333333,1.821432173)(3.500000000,1.703665496)(3.666666667,1.60868188)
(3.750000000,1.567879296)(4.,1.465991139)(5.,1.233889812)(5.333333333,1.189699024)
(5.500000000,1.171144344)(5.666666667,1.154509890)(6.,1.126009141)(6.333333333,1.102593805)
(6.500000000,1.092422787)(6.666666667,1.083121484)(7.,1.066758241)(7.333333333,1.052882545)
(7.500000000,1.046725519)(8.,1.030815000)(8.500000000,1.017995538)(9.,1.007541595)(9.500000000,.9989296509)(10.,.9917737023)
(11.,.9807383124)
\pscurve[linecolor=blue](5.446,20)(5.500000000,17.23944525)
(5.666666667,11.45604186)(6.,6.891207384)(6.333333333,5.002752569)(6.500000000,4.428142515)(6.666666667,3.988041222)
(7.,3.360764010)(7.333333333,2.937306734)(7.500000000,2.773816625)(8.,2.405904364)(8.500000000,2.155256926)(9.,1.974344572)(9.500000000,1.838104328)
(10.,1.732114771)(11.,1.578557606)
\pscurve[linecolor=purple](6.946,20)(7.,17.63500970)(7.333333333,9.727549724)
(7.500000000,7.977407307)(8.,5.273807085)(8.500000000,4.030882059)
(9.,3.324624566)(9.500000000,2.872234232)(10.,2.559121014)(11.,2.156255469)
\pscurve[linecolor=black](7.415,20)(7.500000000,16.45237470)(8.,8.)(8.500000000,5.402702428)
(9.,4.166436205)(9.500000000,3.450604423)(10.,2.986528199)
(11.,2.424375951)
%\psplot[linecolor=pink]
%{7.46}{11}{x x 7 sub -1 x 1 add mul x div exp mul }
%\rput(9,20){$n=8$}
\rput(11.5,0.5){$c$}
\rput(0.4,20){$\sigma$}
 \rput(1.5,20.5){$E_1$}
 \rput(5.4,20.5){$E_2$}
 \rput(6.8,20.5){$E_3$}
 \rput(7.9,20.5){$E_4$}
\rput(1.5,1.5){\tiny$\left(1,1\right)$}
\rput(3.3,4.5){\tiny$\left(17,9\right)$}
\rput(6.9,5.5){\tiny$\left(73,9\right)$}
\rput(7.6,9.5){\tiny$\left(185,9\right)$}
\rput(9.7,5.5){\tiny$\left(255,8\right)$}
\end{pspicture*}}
\caption{$c$--$\sigma$ graphs for $n=3,4,5,6,7,8$}\label{cs}
\end{figure}
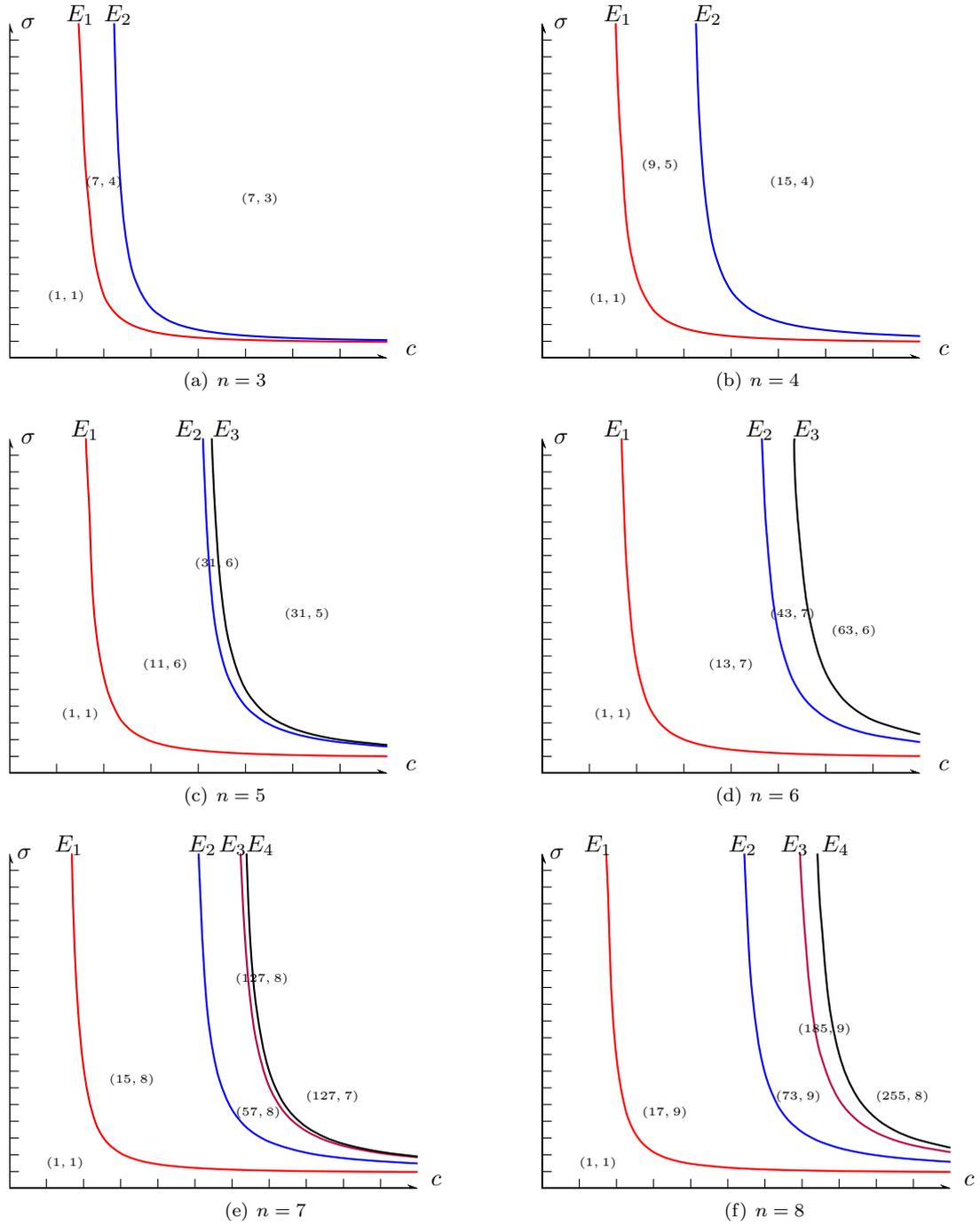

In this section, we measure how much improvement is provided by the  special
algorithm  over the general  algorithm. We use the model for simultaneous
decision in Example~\ref{models} as a benchmark.
In order to measure the performance, we first need to fix the implemental
details of several steps. We have made the following choices.
\begin{enumerate}
\item[(1)]In {\bf Algorithm \ref{CP}. Lines \ref{DECP2}} and {\bf \ref{NDECP2}
}, %and {\bf Algorithm \ref{ga}. Line \ref{generalcp}},
we use  the command {\tt BorderPolynomial}  in {\tt DISCOVERER} \cite{x}
to compute the projection of parametric polynomial equations, which is based
on triangular decomposition method.
\item[(2)]In {\bf Algorithm \ref{EC}. Lines \ref{nde}, \ref{nsde}, \ref{lnnd},
\ref{nscondition0}, \ref{nscondition1}} and {\bf \ref{nscondition2}}, 
    we first cancel the denominators. It is safe due to the condition (3)
in Definition \ref{cd}. Then we use {\tt  RootFinding[Isolate]}  in {\tt
Maple16} 
%\citep{alm1999,r1999,xy2002,rz2003} 
to compute the real solutions of polynomial equations and inequations.
\end{enumerate}

\medskip
\noindent In the following, we provide the experimental results in three
figures:
Figure~\ref{timings}, Figure~\ref{time} and Figure~\ref{cs}.
\medskip

\begin{itemize}
\itemsep=1em

\item
Figure~\ref{timings} provides the timing comparison of Algorithm~\ref{sa}
(Section~\ref{algorithm}) and the general algorithm (Section~\ref{review})
for
 $n=2,\ldots,15$ and $c=1,\ldots,15$.  The top entries are the timings in
seconds for Algorithm~\ref{sa} and the bottom entries are for the general
algorithm.
 The symbol $\infty$ means the computational time
is greater than $1500$ seconds (aborted).
 Both programs were written in Maple and were executed on  an  Intel Core
i7 processor (2.3GHz CPU, 4 Cores and 8GB total memory).

\medskip \noindent
Observe that Algorithm \ref{sa} performs much faster  than the general algorithm
for $n\geq 3$. As is pointed out by \cite{n2012}, when $n>5$, it becomes
expensive for the general algorithm to compute the Hurwitz determinants and
the sizes of these determinants are usually huge, which leads to much difficulties
of the subsequent computations. Moreover, when~$c$ is relatively large, the
real solution isolation of the general algorithm performs quite slowly, even
needs thousands of seconds for one sample point. 

\medskip \noindent
Note also that the special algorithm is a bit slower than the general algorithm
when $n=2$.
 The main reasons are that the special algorithm benefit little from  exploiting
the special structure and that the special algorithm pays the   overhead
cost for analyzing the structure. \item
Figure~\ref{time} provides the timings of Algorithm~\ref{sa}
as a graph  over $time $ and $(n,c)$.
By fitting, we find that it is very close to the graph of
$$time \approx 0.012\left(n-2\right)e^{0.6c}.$$

\noindent
Observe that the computational time is approximately linear with respect
to $n$ (the number of proteins ) and  exponential with respect to $c$ (the
cooperativity).

\item
Figure~\ref{cs}
provides, for $n=3,\ldots,8$, the partition of the $c$-$\sigma$ plane into
several cells by several curves  $E_i(c,\sigma)=0$.  In each cell,  the 
number of (stable) equilibriums is uniform (presented in each cell). Note
that Algorithm \ref{sa} can be applied to  rational $c$ values. For each
$n$, we computed all the critical
$\sigma$ values for different rational $c$ values, obtaining sufficiently
many $(c, \sigma)$ points.  Then we obtained $E_i$ by curve fitting.

\medskip \noindent
Note that we are showing a complete answer to the multistability problem
of the system for the given $n$ values.  We also remark that
 the curve $E_{\lceil \frac{n}{2}\rceil}(c,\sigma)=0$ matches $c-n+1-\left(
\frac{c}{\sigma }\right) ^{\frac{c}{c+1}}=0. $ Note that  only    when $(c,\sigma)$
is  beyond the curve, the number of stable equilibriums is $n$. Thus we have
verified the following conjecture in \cite{cd2002} for $n=3,\ldots,8$:
the system has exactly $n$ stable equilibriums if and only if $c-n+1-\left(
\frac{c}{\sigma }\right) ^{\frac{c}{c+1}}>0$.
\end{itemize}

From the computational results, one sees immediately that the equilibrium classifications
of MSRS also have certain special structures, with interesting biological implications.  
A detailed analysis of the structures and their biological implications 
will be reported in a forthcoming article.

\end{document}